\documentclass{IOS-Book-Article}

\usepackage{mathptmx}
\usepackage{soul}\setuldepth{article}
\usepackage{amssymb}
\usepackage{amsmath}
\usepackage{amsfonts}
\usepackage{amsthm}
\usepackage{caption}
\usepackage[normalem]{ulem}
\usepackage[mathscr]{eucal}

\usepackage{hyperref}

\usepackage{tikz}
\usetikzlibrary{patterns,fit}   
\usepackage{subcaption}    

\newtheorem{theorem}{Theorem}
\newtheorem{definition}{Definition}
\newtheorem{proposition}{Proposition}

\newtheorem{example}{Example}

\usepackage{titlesec}

\def\hb{\hbox to 11.5 cm{}}

\begin{document}

\pagestyle{headings}
\def\thepage{}
\begin{frontmatter}              

\title{Beyond But-for Test: Counterfactual Explanation in Abstract Argumentation via Actual Causality (Extended Version)}


\author[A]{\fnms{Siyi} \snm{Liu}},
\author[A]{\fnms{Muyun} \snm{Shao}}
and
\author[A]{\fnms{Beishui} \snm{Liao}\thanks{Corresponding Author: Beishui Liao, baiseliao@zju.edu.cn}}

\address[A]{Zhejiang University, Hangzhou, China}

\begin{abstract}
Counterfactual explanation in abstract argumentation calls for an answer to the \emph{what-if} query: would the topic argument still be accepted if the status of certain other arguments were changed?
Existing approaches are limited to the but-for test and fail to accommodate more refined counterfactual conditions.
To overcome these limitations, we introduce an intervention-based counterfactual reasoning framework in abstract argumentation.
Our approach encodes the acceptance conditions of arguments as equations, then defines an intervention operator that supports (1) changing sets of arguments \emph{simultaneously}, and (2) fixing \emph{witness} arguments to their actual labels.
Guided by the refined counterfactual condition introduced in the Halpern-Pearl definition, our method goes beyond the but-for test, thereby correctly identifying causes in argumentation structures such as Preemption and Overdetermination.
Through comparison, we show that our method surpasses prior methods in both expressiveness and reliability.
\end{abstract}

\begin{keyword}
Argumentation\sep Counterfactual\sep Actual Causality\sep Explainable AI
\end{keyword}
\end{frontmatter}

\section{Introduction}
\label{sec:intro}
In the field of eXplainable AI (XAI), \emph{abstract argumentation} (AA) has been shown to have advantages in providing structured and verifiable explanations~\cite{cyrasArgumentativeXAI2021,engelmannsystematicreview2022,scheffersempiricalstudy2024}.
AA formalises conflicts within an argumentation framework (AF) and selects acceptable arguments via semantics~\cite{dungAcceptability1995}, which naturally supports answering the post-hoc explanatory question~\cite{amgoud2024posthoc}: \emph{why is an argument accepted under specific semantics?}
Existing argument-based explanation methods answer this question by identifying sets of arguments that satisfy formal properties, such as Relevance~\cite{fanComputingExplanations2015,liaoExplanationSemantics2020}, Monotonicity~\cite{ulbrichtStrongExplanations2021}, Sufficiency and Necessity~\cite{borgMNS2024,borgBasicFramework2021,borgNSExpl2021}.

In everyday reasoning, however, individuals rarely rely on static lists of reasons; they instinctively engage in counterfactual thinking, asking ``\emph{what-if}'' questions to probe causes and dependencies~\cite{millerExplanation2019,byrne2019counterfactuals}. In AA terms, this becomes a direct inquiry:
\begin{quote}
    \textit{Would the topic argument remain accepted if we were to change the acceptance status of one or more arguments suspected to be its causes?}
\end{quote}

This calls for a counterfactual reasoning account in AA.
Early work addresses this via the most primitive form of counterfactual reasoning, the \emph{but-for test}, which conducts a counterfactual analysis by asking whether the effect would have occurred without the cause.
In AA, this analysis is implemented by modifying the AF~\cite{SakamaCounterfactualReasoning2014,rienstra2014phdthesisarinflux,sakamaAbduction2018}.
Sakama~\cite{SakamaCounterfactualReasoning2014}, for instance, defines two primitive updates: adding a new initial argument that attacks an accepted argument to render it rejected; and deleting all incoming attacks on a rejected argument to render it accepted. In essence, these operations ask whether changing a \emph{single} argument's status would change the acceptance of the topic argument. In the resulting counterfactual, only the \emph{suspected cause} is altered, with no mechanism to hold other arguments fixed as \emph{witnesses}. Consider Preemption (Fig.~\ref{fig:preemption}), which illustrates the difficulty that arises when witnesses cannot be held fixed.

\begin{figure*}[htbp]
    \centering
    \begin{subfigure}[b]{0.4\textwidth}
        \centering
        \begin{tikzpicture}[scale=0.78,->,>=latex,auto,
        every node/.style={draw=black,circle,minimum size=0.78cm,inner sep=0pt},
        outnode/.style={text=black, pattern=north east lines, pattern color=gray!50},
        innode/.style={fill=gray!130, text=white},
        undnode/.style={fill=gray!50, text=white}
        ]
            \node[outnode] (a) at (0,-1) {$\epsilon$};
            \node[outnode] (b) at (-1.5,-1) {$\omega$};
            \node[innode] (c) at (-3,-1) {$\beta$};
            \node[innode] (d) at (0,1) {$\alpha$};
            \node[outnode] (e) at (1.5,0) {$\gamma$};
            \node[innode] (f) at (3,0) {$\eta$};
            \draw (c) -- (b); 
            \draw (b) -- (a);
            \draw (d) -- (a);
            \draw (a) -- (e); 
            \draw (d) -- (e); 
            \draw (e) -- (f);
        \end{tikzpicture}
        \caption{$\mathscr{F}_{1}$: Preemption}
        \label{fig:preemption}
    \end{subfigure}
    \hfill
    \begin{subfigure}[b]{0.5\textwidth}
        \centering
        \begin{tikzpicture}[scale=0.78,->,>=latex,auto,
        every node/.style={draw=black,circle,minimum size=0.78cm,inner sep=0pt},
        outnode/.style={text=black, pattern=north east lines, pattern color=gray!50},
        innode/.style={fill=gray!130, text=white},
        undnode/.style={fill=gray!50, text=white}
        ]
            \node[innode] (a) at (0,-1) {$\epsilon$};
            \node[outnode] (b) at (-1.5,-1) {$\omega$};
            \node[innode] (c) at (-3,-1) {$\beta$};
            \node[outnode] (d) at (0,1) {$\alpha$};
            \node[outnode] (e) at (1.5,0) {$\gamma$};
            \node[innode] (f) at (3,0) {$\eta$};
            \node[innode] (g) at (-2,1) {$\mathscr{K}_{\alpha}$};
            
            \draw (c) -- (b); 
            \draw (b) -- (a);
            \draw (d) -- (a);
            \draw (a) -- (e); 
            \draw (d) -- (e); 
            \draw (e) -- (f);
            \draw (g) -- (d);
           \draw[dashed, rounded corners=1pt]
            (-2.6, 0.5) rectangle (-0.5, 1.5);
        \end{tikzpicture}
        \caption{Modified $\mathscr{F}_1$ with $\alpha$ rejected by $\mathscr{K}_{\alpha}$}
        \label{fig:modifiedpreemption}
    \end{subfigure}
    \vspace{-3pt} 
    \caption{Counterexample to the but-for test in AA. The status of each argument is visually encoded as: $\mathbf{in}$=dark gray (accepted), $\mathbf{out}$=cross-hatched (rejected), $\mathbf{und}$=light gray (undecided). The dashed box encloses the newly introduced argument and the attack relation. These visual conventions are used throughout the paper.}
    \label{fig:two-structures}
    \vspace{-8pt} 
\end{figure*}

\begin{example}\label{ex:preemption}
    Consider the AF $\mathscr{F}_{1}$ in Fig.~\ref{fig:preemption}, where $\eta$ is the accepted argument to be explained. $\alpha$ and $\beta$ are initial and thus accepted. From each, an even-length path reaches $\eta$. However, only the ``protecting'' path from $\alpha$ is active, since $\alpha$ also attacks $\epsilon$, keeping it rejected and thereby blocking the defensive path from $\beta$ to $\eta$.
    
    Intuitively, $\alpha$ contributes to $\eta$'s acceptance, whereas $\beta$ does not, leading to the conclusion that $\alpha$ should be qualified as a cause of $\eta$'s acceptance. However, rejecting $\alpha$ (by adding an attacker) does not alter $\eta$'s acceptance: with $\alpha$ out, it no longer keeps $\epsilon$ rejected, reactivating the defensive path from $\beta$ via $\omega$--$\epsilon$--$\gamma$ to $\eta$ (Fig.~\ref{fig:modifiedpreemption}). The problem is that the counterfactual scenario does not keep $\epsilon$ fixed to its actual rejected status, allowing the alternative pathway to distort the counterfactual evaluation.
\end{example}

To address the limitations of the but-for test highlighted above, we build on the expressive framework of structural equation models (SEMs)~\cite{pearl2009causality} and the Halpern–Pearl (HP) definition of actual causality~\cite{HalpernandPearl2005,halpern2015modification,halpern2016actual}.
We encode argument acceptance via equations and define an intervention operator that supports manipulating \emph{sets} of arguments while keeping \emph{witness} arguments fixed. 
Building on this, our approach yields a refined counterfactual condition that goes beyond the but‑for test and properly handles cases where the but‑for test fails, namely Preemption and Overdetermination.
A detailed analysis of these cases is provided in Section~\ref{sec:counterexpl}.

The structure of this paper is as follows.
Section~\ref{sec:pre} recalls the necessary background.
Section~\ref{sec:model} introduces the acceptability model and the intervention operator.
Section~\ref{sec:counterexpl} defines the notion of actual cause via the refined counterfactual condition $\mathrm{AC2}^m$ and forms the basis for the counterfactual explanation method advocated in this paper.
Section~\ref{sec:IntandMod} presents the graphical representation for visualization.
Section~\ref{sec:relatedwork} compares prior methods.
Section~\ref{sec:con} concludes and outlines future work. All proofs of theorems and propositions are provided in the Appendix.

\section{Preliminaries}
\label{sec:pre}

An \emph{argumentation framework} (AF) is a directed graph $\mathscr{F}=(\mathscr{A},\mathscr{R})$, where $\mathscr{A}$ is a finite set of \emph{arguments} and $\mathscr{R}\subseteq \mathscr{A}\times \mathscr{A}$ represents the \emph{attack relation}~\cite{dungAcceptability1995}. The universe of AF is represented by the set $\mathscr{UF}$, while the set of arguments that appear within $\mathscr{UF}$ is denoted as $\mathscr{UA}$. For an AF $\mathscr{F}=(\mathscr{A},\mathscr{R})$ and $\alpha,\beta\in \mathscr{A}$, define $\alpha^{+}=\{\beta\in \mathscr{A}\mid (\alpha,\beta)\in \mathscr{R}\}$ and $\alpha^{-}=\{\beta\in \mathscr{A}\mid (\beta,\alpha)\in \mathscr{R}\}$.

Given an AF $\mathscr{F}=(\mathscr{A},\mathscr{R})$, a set $E\subseteq \mathscr{A}$ is \emph{conflict-free}, denoted as $E\in \mathscr{E}_{cf}(\mathscr{F})$ iff for all $\alpha,\beta\in E$, $(\alpha,\beta)\notin \mathscr{R}$. $E$ \emph{defends} an argument $\alpha\in \mathscr{A}$, iff for all $\beta\in \alpha^-$, there exists $\gamma\in E$, s.t. $(\gamma,\beta)\in \mathscr{R}$. 
The characteristic function of an AF $\mathscr{F}=(\mathscr{A},\mathscr{R})$ is a function $\Gamma$ s.t. for every $E\subseteq A$, we have $\Gamma_{\mathscr{F}}(E)=\{\alpha\in \mathscr{A}\mid E~\text{defends}~\alpha\}$.

To characterize how an argument \emph{(in)directly defends} another, we introduce the notion of directed and defense paths.

\begin{definition}\label{def:path-relevant}
    Let $\mathscr{F}=(\mathscr{A},\mathscr{R})$ be an AF, and $\alpha,\beta\in \mathscr{A}$. A sequence $P=(\alpha_0,\cdots,\alpha_n)$ is a directed path from $\alpha$ to $\beta$ iff $\alpha_0=\alpha$, $\alpha_n=\beta$, and $(\alpha_{i-1},\alpha_i)\in \mathscr{R}$ for all $i\in \{1,\cdots,n\}$; its length is $n$ (number of edges). If such a path exists, $\alpha$ is also said to be relevant to $\beta$ (by convention, each $\alpha$ is relevant to itself via a path of length $0$).
    For a set $S\subseteq \mathscr{A}$, $S$ is relevant for $\alpha$ if every $\beta\in S$ is relevant to $\alpha$.

    A directed path from $\alpha$ to $\beta$ is a defense path if its length is even. Furthermore, if there exists a directed path $P=(\alpha,\gamma,\beta)$ of length two, s.t. $\gamma$ is distinct from $\alpha$ and $\beta$, we say that $\alpha$ directly defends $\beta$.
\end{definition}

The acceptability of an argument is prescribed by specific \emph{argumentation semantics} ~\cite{baroniAbstractAF2018}. Formally, a semantics $\sigma$ assigns to each AF $\mathscr{F}=(\mathscr{A},\mathscr{R})$ a subset of $2^{\mathscr{A}}$, denoted as $\mathscr{E}_\sigma(\mathscr{F})$.
Classical semantics includes \emph{admissible}, \emph{complete}, \emph{grounded}, \emph{preferred}, and \emph{stable} semantics (abbr. $ad,co,gr,pr,st$). 

An extension under the semantics $\sigma$ is called a $\sigma$-extension.
While we focus on \emph{complete} ($co$) semantics as our baseline, we outline in Section.~\ref{sec:model} how the formal framework for counterfactual explanation can be adapted to other semantics.
Let $\mathscr{F}=(\mathscr{A},\mathscr{R})$ be an AF, and $E\in \mathscr{E}_{cf}(\mathscr{F})$, $E\in \mathscr{E}_{co}(\mathscr{F})$ iff $E$ defends all its elements ($E\in \mathscr{E}_{ad}(\mathscr{F})$), and $E=\Gamma_{\mathscr{F}}(E)$.

Argumentation semantics can be defined through two parallel methods: the above \emph{extension-based} method, and the \emph{labelling-based} method, which formulates semantics in terms of $\sigma$-labelling~\cite{caminadalogical2009}.

\begin{definition}\label{def:labelling}
    Given an AF $\mathscr{F}=(\mathscr{A},\mathscr{R})$, a labelling for $\mathscr{F}$ is a total function $\mathscr{L}: \mathscr{A}\mapsto \mathbb{V}$, that assigns each argument a label from the set $\mathbb{V}=\{\mathbf{in},\mathbf{out},\mathbf{und}\}$, denoted as $\mathscr{L}_{\sigma}(\mathscr{F})$ for $\sigma\in \{ad,co,gr,pr,st\}$. 
\end{definition}

Let $\mathbf{in}(\mathscr{L})$, $\mathbf{out}(\mathscr{L})$, and $\mathbf{und}(\mathscr{L})$ be the sets of arguments labelled $\mathbf{in}$, $\mathbf{out}$, and $\mathbf{und}$ respectively. We use the triple $\big\langle \mathbf{in}(\mathscr{L}),\mathbf{out}(\mathscr{L}),\mathbf{und}(\mathscr{L})\big\rangle$ to represent the labelling $\mathscr{L}$.

A labelling $\mathscr{L}$ of $\mathscr{F}=(\mathscr{A},\mathscr{R})$ is said to be \textit{admissible} ($\mathscr{L}\in \mathscr{L}_{ad}(\mathscr{F})$) iff $\forall \alpha\in \mathbf{in}(\mathscr{L})\cup \mathbf{out}(\mathscr{L})$ it holds that: (i) $\mathscr{L}(\alpha)=\mathbf{out}$ iff $\exists (\beta,\alpha)\in \mathscr{R}$ s.t. $\mathscr{L}(\beta)=\mathbf{in}$; and (ii) $\mathscr{L}(\alpha)=\mathbf{in}$ iff $\forall (\beta,\alpha)\in \mathscr{R}$, $\mathscr{L}(\beta)=\mathbf{out}$. Moreover, $\mathscr{L}$ is a \emph{complete labelling} ($\mathscr{L}\in \mathscr{L}_{co}(\mathscr{F})$) iff conditions (i) and (ii) hold for all arguments in $\mathscr{A}$.

\section{Intervention-based Counterfactual Reasoning in AA}
\label{sec:model}
\subsection{Acceptability Model and Intervention}

We start by defining a language of the logical system for reasoning about \emph{argument-label pairs}. The \emph{atomic propositions} are of the form $\mathbf{v}(\alpha)$, asserting that ``argument $\alpha$ has the label $\mathbf{v}$''. The set of logical connectives $\{\neg,\wedge\}$ forms an adequate set, where $\neg \mathbf{v}(\alpha)$ is interpreted as ``the label of $\alpha$ is not $\mathbf{v}$''.
The language further includes \emph{counterfactual formulas} $[\psi]\varphi$, where $[\psi]$ is an \emph{intervention operator},
read as ``after intervening to make $\psi$ true, $\varphi$ holds''. Here, $\psi$, called a \emph{hypothesis}, is a consistent conjunction of atomic propositions, ensured by the constraint that no argument receives two different labels.

\begin{definition}[Syntax]\label{def:syntax}
    Given a universal set of arguments $\mathscr{UA}$ and the range of label $\mathbb{V}$, with $\alpha$ ranging over $\mathscr{UA}$ and $\mathbf{v}$ ranging over $\mathbb{V}$, the syntax of the language $L_{\mathscr{UA}}$ is defined recursively as follows:
    \[
    \varphi::=\mathbf{v}(\alpha) \mid \neg\varphi\mid \varphi \wedge\varphi\mid [\psi]\varphi,
    \]
    where in $[\psi]\varphi$, the hypothesis $\psi$ is defined by the grammar: $\psi ::= \top\mid \mathbf{v}(\alpha) \mid \psi \land \psi$, with the constraint that for any distinct atomic subformula $\mathbf{v_1}(\alpha_1),\mathbf{v_2}(\alpha_2)\in \mathsf{Sub}(\psi)$, $\alpha_{1}\neq \alpha_2$, where $\mathsf{Sub}(\psi)$ denotes the set of all atomic subformulas of $\psi$.
\end{definition}

A hypothesis $\psi$ is called an \emph{atomic hypothesis} if $\psi=\mathbf{v}(\alpha)$ for $\alpha\in \mathscr{UA}$ and $\mathbf{v}\in \mathbb{V}$; otherwise, it is a \emph{compound hypothesis}. And for all $\mathbf{v_i}(\alpha_i)\in \mathsf{Sub}(\psi)$, denote the set of $\alpha_{i}$ by $\mathscr{A}_{\psi}$.
We write $L_{\mathscr{UA}}^{-}$ for the fragment of $L_{\mathscr{UA}}$ without counterfactual formulas.

Given an AF $\mathscr{F}$, there exists a \emph{unique} \emph{acceptability model}, denoted as $\mathscr{M}_{\mathscr{F}}$, which is a set of \emph{acceptability equations}. For any argument $\alpha$ in $\mathscr{F}$, its equation $f_{\alpha}$ is a function from the labels of $\alpha$'s attackers to the label of $\alpha$.

\begin{definition}[Acceptability Model]\label{def:acceptabilityModel}
    Given an AF $\mathscr{F}=(\mathscr{A},\mathscr{R})$, its acceptability model $\mathscr{M}_{\mathscr{F}}=\{f_{\alpha}\}_{\alpha\in \mathscr{A}}$ is the unique set of functions, where each $f_{\alpha}$ is called an acceptability equation for $\alpha$, defined as follows.

    For each $\alpha\in \mathscr{A}$. The function $f_{\alpha}$ takes as input an assignment of labels to each attacker $\beta\in\alpha^{-}$, denoted by $(\mathbf{v}_{\beta})_{\beta\in \alpha^{-}}$ with $\mathbf{v}_{\beta}\in\mathbb{V}$, and returns a label in $\mathbb{V}$ as follows:
    \[
    f_{\alpha}\big( (\mathbf{v}_{\beta})_{\beta\in \alpha^{-}} \big) = 
    \begin{cases}
    \mathbf{in} & \text{if } \forall \beta\in \alpha^{-}: \mathbf{v}_{\beta} = \mathbf{out}, \\[1pt]
    \mathbf{out} & \text{if } \exists \beta\in \alpha^{-}: \mathbf{v}_{\beta} = \mathbf{in}, \\[1pt]
    \mathbf{und} & \text{otherwise}.
    \end{cases}
    \]
    
\end{definition}

Our framework adopts the \emph{intervention‑based semantics} of SEMs~\cite{pearl2009causality} to define counterfactual reasoning.
An intervention modifies the model $\mathscr{M}_{\mathscr{F}}$ according to a hypothesis $\psi$, producing an \emph{intervened model} $\mathscr{M}_{\mathscr{F}}^{\psi}$ where the acceptability equation of each argument appearing in $\psi$ is replaced by a constant function returning its stipulated label, while all other equations remain unchanged.

\begin{definition}[Intervention]\label{def:Intervention}
    Given an AF $\mathscr{F}=(\mathscr{A},\mathscr{R})$ and its acceptability model $\mathscr{M}_{\mathscr{F}}=\{f_{\alpha}\}_{\alpha\in\mathscr{A}}$, let $\psi=\bigwedge_{i=1}^{k}\mathbf{v_i}(\alpha_i)$ (with $\alpha_i\in\mathscr{A}$ and $\mathbf{v_i}\in\mathbb{V}$) be a hypothesis. The intervened model $\mathscr{M}_{\mathscr{F}}^{\psi}=\{f_{\alpha}^{\psi}\}_{\alpha\in\mathscr{A}}$ is defined by: for each $\alpha\in \mathscr{A}$, a function $f_{\alpha}^{\psi}$ on the same domain as $f_{\alpha}$:
    \[
    f_{\alpha}^{\psi} = 
    \begin{cases}
        \mathbf{v_i}, & \exists i\in \{1,\cdots,k\}: \alpha=\alpha_i,  \\[1pt]
        f_{\alpha} & \text{otherwise}.
    \end{cases}
    \]
\end{definition}

Recalled from Def.~\ref{def:labelling}, a labelling $\mathscr{L}$ is a total function that assigns a label to each argument in an AF $\mathscr{F}$. To unify non-intervened and intervened models, we have introduced $\top$ and let $\mathscr{M}_{\mathscr{F}}=\mathscr{M}_{\mathscr{F}}^{\top}$.
A \emph{solution} of an (intervened) model $\mathscr{M}_\mathscr{F}$ is a labelling that satisfies all its equations. Unlike classic SEMs, which are typically acyclic and yield a unique solution once the exogenous variables are fixed, an AF may contain cycles and thus admit multiple solutions, which correspond to the multiple $co$-labellings.


\begin{definition}[Solution]\label{def:solution}
    Given an AF $\mathscr{F}=(\mathscr{A},\mathscr{R})$, and its acceptability model $\mathscr{M}_{\mathscr{F}}$.
    A solution of the (intervened) model $\mathscr{M}_{\mathscr{F}}^{\psi}$ is a labelling $\mathscr{L}: \mathscr{A} \mapsto \mathbb{V}$, s.t. for all $\alpha\in\mathscr{A}$:
    \[
    \mathscr{L}(\alpha) = f_{\alpha}^{\psi}\big( (\mathscr{L}(\beta))_{\beta\in \alpha^{-}} \big),~\text{where}~f_{\alpha}^{\psi}\in \mathscr{M}_{\mathscr{F}}^{\psi}.
    \]
\end{definition}

\begin{example}\label{ex:runningexample}
    Consider the AF $\mathscr{F}_{2}$ in Fig.~\ref{fig:runningexample}. 
    For $\alpha$ whose only attacker is $\gamma\in \alpha^{-}$, the acceptability equation $f_{\alpha}$ can be expressed as:    
    \begin{center}
        \begin{minipage}[c]{0.4\textwidth}
            \centering
            \begin{tikzpicture}[->,>=latex,auto,
                every node/.style={draw=black,circle,minimum size=0.7cm,inner sep=0pt},
                outnode/.style={text=black, pattern=north east lines, pattern color=gray!50},
                innode/.style={fill=gray!130, text=white},
                undnode/.style={fill=gray!50, text=white},
                scale=0.9
            ]
                \node[outnode] (c) at (0,0) {$\gamma$};
                \node[innode] (a) at (-1.5,0.7) {$\alpha$};
                \node[innode] (b) at (-1.5,-0.7) {$\beta$};
                \node[innode] (d) at (1.5,0) {$\eta$};
                \draw (b)--(c);
                \draw (c)--(d);
                \draw (a) to[bend left] (c);
                \draw (c) to[bend left] (a);
            \end{tikzpicture}
            \captionof{figure}{$\mathscr{F}_2$ with Even-Cycle}
            \label{fig:runningexample}
        \end{minipage}
        \hfill
        \begin{minipage}[c]{0.55\textwidth}
        \vspace{-10pt}
            \[
            f_{\alpha}=\big\{\, \big((\mathbf{in}),\mathbf{out}\big),\;
            \big((\mathbf{out}),\mathbf{in}\big),\;
            \big((\mathbf{und}),\mathbf{und}\big) \,\big\},
            \]
            where each pair $\big((\mathbf{v}_{\gamma}),\mathbf{v}_{\alpha}\big)$ denotes that input label $\mathbf{v}_{\gamma}$ yields output label $\mathbf{v}_{\alpha}$.
        \end{minipage}
    \end{center}
    \vspace{-3pt}
    
    Intervening with the hypothesis $\psi=\mathbf{out}(\alpha)$, the equation $f_{\alpha}$ is replaced by the constant function $f_{\alpha}^{\psi}$ that always returns $\mathbf{out}$. The unique solution of the intervened model $\mathscr{M}_{\mathscr{F}_{2}}^{\psi}$ is the labelling $\mathscr{L}=\big\langle \{\beta,\eta\},\{\alpha,\gamma\},\emptyset \big\rangle$.
\end{example}

The set of all labellings that are solutions of $\mathscr{M}_{\mathscr{F}}^{\psi}$ is denoted as $\mathsf{Sol}(\mathscr{M}_{\mathscr{F}}^{\psi})$. Without intervention, the solutions of the model $\mathscr{M}_{\mathscr{F}}$ coincide with the $co$-labellings of the AF.

\begin{theorem}\label{the:SolandL}
    For any AF $\mathscr{F}=(\mathscr{A},\mathscr{R})$, and its acceptability model $\mathscr{M}_{\mathscr{F}}$, it holds that $\mathsf{Sol}(\mathscr{M}_{\mathscr{F}})=\mathscr{L}_{co}(\mathscr{F})$.
\end{theorem}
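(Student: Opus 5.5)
The plan is a double inclusion, proved argument by argument. By Def.~\ref{def:solution} with $\psi=\top$, a labelling $\mathscr{L}$ is in $\mathsf{Sol}(\mathscr{M}_{\mathscr{F}})$ iff $\mathscr{L}(\alpha)=f_{\alpha}\big((\mathscr{L}(\beta))_{\beta\in\alpha^-}\big)$ for every $\alpha\in\mathscr{A}$. Since $\mathscr{M}_{\mathscr{F}}^{\top}=\mathscr{M}_{\mathscr{F}}$, every $f_\alpha^\top$ is just $f_\alpha$. By the labelling definition, $\mathscr{L}\in\mathscr{L}_{co}(\mathscr{F})$ iff conditions (i) and (ii) hold for every $\alpha\in\mathscr{A}$. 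It therefore suffices to fix one argument $\alpha$ and show that the pointwise equation at $\alpha$ is equivalent to the conjunction of (i) and (ii) at $\alpha$.

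First I will check that the three cases of $f_\alpha$ in Def.~\ref{def:acceptabilityModel} are mutually exclusive, so that $f_\alpha$ is a well-defined function. The ``all attackers $\mathbf{out}$'' case and the ``some attacker $\mathbf{in}$'' case cannot both hold. This is the one point needing care: when $\alpha^-=\emptyset$, the first condition is vacuously true and the second is false, so $f_\alpha=\mathbf{in}$. This matches (ii), which is also vacuously satisfied for an unattacked $\alpha$.

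For $\mathsf{Sol}\subseteq\mathscr{L}_{co}$, assume $\mathscr{L}(\alpha)=f_\alpha(\cdots)$.
\begin{itemize}
\item For (i): if $\mathscr{L}(\alpha)=\mathbf{out}$, then the second case of $f_\alpha$ applied, so some attacker is $\mathbf{in}$. Conversely, if some attacker is $\mathbf{in}$, exclusivity forces $f_\alpha=\mathbf{out}$, hence $\mathscr{L}(\alpha)=\mathbf{out}$.
\item For (ii): if $\mathscr{L}(\alpha)=\mathbf{in}$, the first case applied, so all attackers are $\mathbf{out}$. Conversely, if all attackers are $\mathbf{out}$, then $f_\alpha=\mathbf{in}$, hence $\mathscr{L}(\alpha)=\mathbf{in}$.
\end{itemize}

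For $\mathscr{L}_{co}\subseteq\mathsf{Sol}$, assume (i) and (ii) hold at $\alpha$ and split on the attackers' labels.
\begin{itemize}
\item If all attackers are $\mathbf{out}$, then (ii) gives $\mathscr{L}(\alpha)=\mathbf{in}=f_\alpha$.
\item If some attacker is $\mathbf{in}$, then (i) gives $\mathscr{L}(\alpha)=\mathbf{out}=f_\alpha$.
\item Otherwise, no attacker is $\mathbf{in}$ and not all attackers are $\mathbf{out}$, so $f_\alpha=\mathbf{und}$. The ``only if'' directions of (i) and (ii) then exclude $\mathscr{L}(\alpha)=\mathbf{out}$ and $\mathscr{L}(\alpha)=\mathbf{in}$, so $\mathscr{L}(\alpha)=\mathbf{und}$.
\end{itemize}

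No step is a genuine obstacle. The only subtle points are the vacuous-quantifier handling of unattacked arguments and the fact that (i) and (ii) range over all of $\mathscr{A}$ for complete labellings, not only over $\mathbf{in}(\mathscr{L})\cup\mathbf{out}(\mathscr{L})$. That full range is exactly what rules out a wrongly labelled $\mathbf{und}$ argument in the $\mathscr{L}_{co}\subseteq\mathsf{Sol}$ direction, and I will state it explicitly there.
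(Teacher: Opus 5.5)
Your proof is correct and follows essentially the same route as the paper: a double inclusion, reduced to a pointwise case analysis that compares the equation $\mathscr{L}(\alpha)=f_{\alpha}\big((\mathscr{L}(\beta))_{\beta\in\alpha^-}\big)$ with conditions (i) and (ii) at each argument. The differences are cosmetic. In the reverse inclusion you split on the attackers' labels rather than on $\mathscr{L}(\alpha)$, and you explicitly verify both halves of each biconditional and the mutual exclusivity of the cases of $f_\alpha$, which the paper's case split on $\mathscr{L}(\alpha)$ leaves partly implicit.
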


The solution of a model $\mathscr{M}_{\mathscr{F}}^{\psi}$ (Def.~\ref{def:solution}) is tied to \emph{complete} semantics but can extend to others (e.g., $gr,pr,st$) by constraining the solution set $\mathsf{Sol}(\mathscr{M}_{\mathscr{F}}^{\psi})$. For \emph{grounded} semantics, the solution is restricted to the one where $\mathbf{in}(\mathscr{L})$ is minimal w.r.t. set-inclusion. In Ex.~\ref{ex:runningexample}, the intervened model $\mathscr{M}_{\mathscr{F}_{2}}^{\psi}$ admits a unique solution $\mathscr{L}$, which strictly satisfies set-inclusion minimality and thus represents the exact grounded solution.

\subsection{Semantics: Base and Model-level Satisfaction}

Next, we define the semantics of this logical system. We introduce two levels of satisfaction: \emph{base satisfaction} and \emph{model-level satisfaction}. The base satisfaction relation $(\mathscr{M},\mathscr{L})\Vdash \varphi$ evaluates a formula $\varphi\in L_{\mathscr{UA}}^{-}$ which is not a counterfactual formula, w.r.t. a specific solution $\mathscr{L}$ of a (possibly intervened) model $\mathscr{M}$. 

\begin{definition}[Base Satisfaction]
     Given an AF $\mathscr{F}=(\mathscr{A},\mathscr{R})$ and its acceptability model $\mathscr{M}_{\mathscr{F}}$. Consider any (intervened) model $\mathscr{M}=\mathscr{M}_{\mathscr{F}}^{\psi}$ and a solution $\mathscr{L}$ of $\mathscr{M}$, the base satisfaction relation $(\mathscr{M},\mathscr{L})\Vdash \varphi$ for formula $\varphi\in L_{\mathscr{UA}}^{-}$ is defined recursively:
     \begin{align*}
     (\mathscr{M},\mathscr{L}) \Vdash \mathbf{v}(\alpha) &\iff \mathscr{L}(\alpha) = \mathbf{v}, \\
     (\mathscr{M}, \mathscr{L}) \Vdash \neg\varphi &\iff (\mathscr{M}, \mathscr{L}) \nVdash \varphi, \\
     (\mathscr{M}, \mathscr{L}) \Vdash \varphi \land \psi &\iff 
        (\mathscr{M}, \mathscr{L}) \Vdash \varphi \text{ and } (\mathscr{M}, \mathscr{L}) \Vdash \psi.
    \end{align*}
\end{definition}

Model-level satisfaction lifts the base evaluation, considering either \emph{all} ($\models_{\forall}$) or \emph{some} ($\models_{\exists}$) of its solutions, and defines the meaning of counterfactual formulas.

\begin{definition}[Model-level Satisfaction]\label{def:Model-Satisfaction}
    Let $\varphi \in L_{\mathscr{UA}}$ be a formula, $\mathscr{F} = (\mathscr{A}, \mathscr{R})$ an AF, and $\mathscr{M} = \mathscr{M}_{\mathscr{F}}^{\psi}$ any (intervened) acceptability model. \textbf{Strong satisfaction ($\models_{\forall}$)} and \textbf{weak satisfaction ($\models_{\exists}$)} are defined recursively over the structure of $\varphi$:
    \begin{align*}
    \text{For } \varphi \in L_{\mathscr{UA}}^{-}:\quad
    &\mathscr{M} \models_{\forall} \varphi \iff \forall \mathscr{L} \in \mathsf{Sol}(\mathscr{M}):\, (\mathscr{M}, \mathscr{L}) \Vdash \varphi,\\
    &\mathscr{M} \models_{\exists} \varphi \iff \exists \mathscr{L} \in \mathsf{Sol}(\mathscr{M}):\, (\mathscr{M}, \mathscr{L}) \Vdash \varphi.\\[4pt]
    \text{For } \varphi = [\psi]\chi \text{ with } \psi,\chi \in L_{\mathscr{UA}}^{-}:\quad
    &\mathscr{M} \models_{\forall} [\psi]\chi \iff \mathscr{M}^{\psi} \models_{\forall} \chi,\\
    &\mathscr{M} \models_{\exists} [\psi]\chi \iff \mathscr{M}^{\psi} \models_{\exists} \chi.
    \end{align*}

\end{definition}

\begin{example}[Cont. of Ex.~\ref{ex:runningexample}]
    Recall $\mathscr{F}_{2}$ in Fig.~\ref{fig:runningexample} and its unique $co$-labelling $\mathscr{L}\in \mathscr{L}_{co}(\mathscr{F}_2)$ depicted therein. Base satisfaction gives $(\mathscr{M}_{\mathscr{F}_2},\mathscr{L})\Vdash \mathbf{in}(\alpha)$. Consider two interventions: $\psi_1=\mathbf{out}(\alpha)$ and $\psi_2=\mathbf{out}(\beta)$.
    \begin{itemize}
        \item Under $\psi_1=\mathbf{out}(\alpha)$, $\mathscr{M}_{\mathscr{F}_2}^{\psi_1}$ has the unique solution $\mathscr{L}_{1}=\big\langle \{\beta,\eta\},\{\alpha,\gamma\},\emptyset\big\rangle$, where $\eta$ labelled $\mathbf{in}$. Thus, $\mathscr{M}_{\mathscr{F}_{2}}\models_{\forall} [\mathbf{out}(\alpha)]\mathbf{in}(\eta)\iff \mathscr{M}_{\mathscr{F}_{2}}^{\mathbf{out}(\alpha)} \models_{\forall} \mathbf{in}(\eta)$.
        \item Under $\psi_2=\mathbf{out}(\beta)$, one of the solutions of $\mathscr{M}_{\mathscr{F}_2}^{\psi_2}$ is $\mathscr{L}_2 = \big\langle \emptyset, \{\beta\}, \{\alpha, \gamma, \eta\} \big\rangle$, where $\eta$ labelled $\mathbf{und}$. Thus, $\mathscr{M}_{\mathscr{F}_{2}}\models_{\exists} [\mathbf{out}(\beta)]\mathbf{out}(\eta)\iff \mathscr{M}_{\mathscr{F}_{2}}^{\mathbf{out}(\beta)} \models_{\exists} \mathbf{out}(\eta)$.
    \end{itemize}

\end{example}

\section{Counterfactual Explanation based on Actual Causality}\label{sec:counterexpl}

In general, an \emph{explanation strategy} maps an AF $\mathscr{F}$ and a topic argument $\alpha$ in $\mathscr{F}$ to a set of explanations of a specific type. Here, we focus on the \emph{formula-based} strategy. A formula‑based explanation method is a function $\operatorname{F-Expl}: \mathscr{UF} \times \mathscr{UA} \to 2^{L_{\mathscr{UA}}^{-}}$ associating with each AF $\mathscr{F}=(\mathscr{A},\mathscr{R})$ and $\alpha\in \mathscr{A}$ a subset of $L_{\mathscr{UA}}^{-}$, denoted as $\operatorname{F-Expl}(\mathscr{F},\alpha)$.
Each $\psi\in\operatorname{F-Expl}(\mathscr{F},\alpha)$ is called a \emph{formula‑based explanation}.

\subsection{From But-for Cause to Actual Causes}
\label{subsec:butfortoactual}

To formalise ``changing the cause(s)'', we introduce the set of \emph{modified hypotheses} for the hypothesis $\psi$: those obtained by changing the label of at least one argument in $\psi$ to a different label while keeping the other arguments fixed.

\begin{definition}[Modified Hypotheses]\label{def:modified-hyp}
    Let $\psi = \bigwedge_{i=1}^k \mathbf{v_i}(\alpha_i)$ be a hypothesis. The set of modified hypotheses of $\psi$, denoted by $\mathsf{Mod}(\psi)$, comprises all $\psi' = \bigwedge_{i=1}^k \mathbf{v_i^{\prime}}(\alpha_i)$ s.t. there exists at least one index $i \in \{1,\dots,k\}$ with $\mathbf{v_i^{\prime}} \neq \mathbf{v_i}$.
\end{definition}

We consider two counterfactual conditions: the but‑for test $\mathrm{BFT}$ (underlying graph‑modification, Ex.~\ref{ex:preemption}) and the modified condition $\mathrm{AC2}^m$.
A hypothesis passing $\mathrm{BFT}$ is called a but‑for cause, and one 
passing $\mathrm{AC2}^m$ an actual cause.
Both admit strong ($\forall$) and weak ($\exists$) versions, as an AF 
may admit multiple $co$-labellings due to cycles.

The \emph{but-for} test ($\mathrm{BFT}$) applies only to an atomic hypothesis of the form $\mathbf{v}(\alpha)$ and involves no witness. It asks whether the effect would have occurred without the cause and, in the AA setting, modifies the suspected argument's label to either $\mathbf{in}$ or $\mathbf{out}$. We define the restricted modification set $\mathsf{Mod}_{\mathbf{in/out}}(\psi)\subseteq\mathsf{Mod}(\psi)$, which excludes modifications to $\mathbf{und}$, following the graph‑modification approach of Sakama~\cite{SakamaCounterfactualReasoning2014}.

\begin{definition}[But-for Cause]\label{def:butforcause}
    Let $\mathscr{F}=(\mathscr{A},\mathscr{R})$ be an AF, $\mathscr{M}_{\mathscr{F}}$ its acceptability model, $\alpha\in\mathscr{A}$ the topic, and $\mathscr{L}$ a solution of $\mathscr{M}_{\mathscr{F}}$.
    An atomic hypothesis $\psi\in L_{\mathscr{UA}}^{-}$ is a but-for cause for $\mathbf{v}(\alpha)$ with $(\mathscr{M},\mathscr{L})\Vdash \psi\wedge \mathbf{v}(\alpha)$, iff it passes the but-for test $\mathrm{BFT}_{q}$, formally:
    \[
    \exists \psi^{\prime}\in \mathsf{Mod}_{\mathbf{in/out}}(\psi): \mathscr{M}_{\mathscr{F}}\models_{q}[\psi^{\prime}]\neg \mathbf{v}(\alpha) \text{ for } q\in \{\forall,\exists\}.
    \]
\end{definition}

Our definition of an actual cause in AA preserves the three HP-conditions of actual causality~\cite[Def.~2.1]{halpern2015modification}:
Actuality requires the cause and effect to hold in the original labelling; Counterfactual Condition uses the modified version $\mathrm{AC2}^m$ adopted in this paper;
\footnote{In the original HP-definition, this condition is denoted by $\mathrm{AC2(a^m)}$~\cite{halpern2015modification}.}
Minimality ensures that the cause contains no redundant parts.

\begin{definition}[Actual Cause]\label{def:actualcauses}
    Let $\mathscr{F}=(\mathscr{A},\mathscr{R})$ be an AF and $\mathscr{M}_{\mathscr{F}}$ its acceptability model. Consider $\alpha\in \mathscr{A}$ the topic, and $\mathscr{L}$ a solution of $\mathscr{M}_{\mathscr{F}}$.
    A hypothesis $\psi\in L_{\mathscr{UA}}^{-}$ is an actual cause for $\mathbf{v}(\alpha)$ iff it satisfies the following three conditions:
    \begin{description}
        \item[Actuality ($\mathrm{AC1}$):] $(\mathscr{M}_{\mathscr{F}},\mathscr{L})\Vdash \psi\wedge \mathbf{v}(\alpha)$.
        \item[Modified Counterfactual Condition ($\mathrm{AC2}^m$):] In Def.~\ref{def:ac2(am)} below.
        \item[Minimality ($\mathrm{AC3}$):] Let $\psi=\bigwedge_{i=1}^k \mathbf{v_i}(\alpha_i)$. Then for every proper sub-conjunction $\psi_J = \bigwedge_{i\in J} \mathbf{v_i}(\alpha_i)$ ($J \subsetneq \{1,\dots,k\}$), $\psi_{J}$ fails to satisfy both $\mathrm{AC1}$ and $\mathrm{AC2}$.
    \end{description}

\end{definition}

From Def.~\ref{def:butforcause}, a but-for cause trivially satisfies $\mathrm{AC1}$ and $\mathrm{AC3}$. The distinction between a but-for cause and an actual cause lies solely in the counterfactual condition.

\begin{proposition}\label{pro:forbut-for}
     Let $\mathscr{M}_{\mathscr{F}}$ be an acceptability model with a topic argument $\alpha$.
     If hypothesis $\psi$ is a but-for cause for $\mathbf{v}(\alpha)$, then $\psi$ satisfies Actuality ($\mathrm{AC1}$) and Minimality ($\mathrm{AC3}$).
\end{proposition}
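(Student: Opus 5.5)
The proof splits into one short step per condition, each read off directly from Def.~\ref{def:butforcause} and Def.~\ref{def:actualcauses}.

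\emph{Actuality.} Def.~\ref{def:butforcause} presupposes a solution $\mathscr{L}$ of $\mathscr{M}_{\mathscr{F}}$ with $(\mathscr{M}_{\mathscr{F}},\mathscr{L})\Vdash \psi\wedge\mathbf{v}(\alpha)$. This is verbatim the condition $\mathrm{AC1}$ in Def.~\ref{def:actualcauses}, read with respect to the same $\mathscr{L}$. So $\mathrm{AC1}$ holds by definition, with no computation needed.

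\emph{Minimality.} Def.~\ref{def:butforcause} applies only to atomic hypotheses, so $\psi=\mathbf{v_1}(\alpha_1)$ with $k=1$. The only index set $J\subsetneq\{1\}$ is $J=\emptyset$, so the only proper sub-conjunction is the empty conjunction $\psi_\emptyset=\top$. We must show that $\top$ does not qualify as a cause in the sense required by $\mathrm{AC3}$.

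The intended reading is that a proper sub-conjunction must fail to satisfy $\mathrm{AC1}$ and $\mathrm{AC2}^m$ jointly. The argument for $\top$ goes as follows.
\begin{itemize}
\item $\mathsf{Mod}(\top)$ is empty, because Def.~\ref{def:modified-hyp} requires at least one index $i$ whose label is changed, and $\top$ has no indices.
\item Any counterfactual condition of the form ``there exists a modified hypothesis $\psi'$ such that $\mathscr{M}_{\mathscr{F}}\models_q[\psi'\wedge\ldots]\neg\mathbf{v}(\alpha)$'' therefore fails vacuously for $\top$.
\item Alternatively, $\top$ is not a proper hypothesis naming any argument, so it cannot serve as a cause.
\end{itemize}
Either way, $\psi_\emptyset$ fails the conjunction of $\mathrm{AC1}$ and $\mathrm{AC2}^m$, and $\mathrm{AC3}$ holds.

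The main obstacle is the exact wording of $\mathrm{AC3}$. It says that $\psi_J$ ``fails to satisfy both'' conditions, and $\mathrm{AC2}^m$ is only given later (Def.~\ref{def:ac2(am)}). I expect, and will rely on, $\mathrm{AC2}^m$ being existential over $\mathsf{Mod}(\psi)$, as in the HP definition. This is what makes the empty case vacuous.

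If $\mathrm{AC2}^m$ were instead quantified differently, I would fall back on two other routes:
\begin{itemize}
\item the convention that the cause must mention at least one argument, since $\mathscr{A}_\top=\emptyset$;
\item the observation that intervening with $\top$ leaves $\mathscr{M}_{\mathscr{F}}$ unchanged, so by Theorem~\ref{the:SolandL} its solutions are just the $co$-labellings and cannot witness any change in $\alpha$'s label relative to the actual situation.
\end{itemize}
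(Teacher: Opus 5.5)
Your proof is correct and follows the paper's route: $\mathrm{AC1}$ is read off from the presupposition in Def.~\ref{def:butforcause}, and $\mathrm{AC3}$ comes from atomicity.

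The one real difference is in $\mathrm{AC3}$. The paper simply asserts that an atomic hypothesis ``possesses no proper sub-conjunctions''. You instead notice that $J=\emptyset\subsetneq\{1\}$ is admitted by the definition and yields the empty conjunction $\top$. The paper itself treats $\top$ as a genuine hypothesis: it is in the grammar, and it is used as the witness $\chi=\top$ in the proof of Theorem~\ref{theorem:threevariants}. You then dispose of it via $\mathsf{Mod}(\top)=\emptyset$, so the existential in $\mathrm{AC2}^m_q$ fails for every $\chi$. That closes a small hole in the paper's one-line argument.

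Since $\top$ trivially satisfies $\mathrm{AC1}$, this step depends on reading ``fails to satisfy both'' as ``does not satisfy $\mathrm{AC1}$ and $\mathrm{AC2}^m$ jointly''. That is the right reading, and it is the one the paper uses in the proof of Proposition~\ref{pro:relevant}.

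Drop your second fallback, though. Under $\models_{\exists}$ the unintervened model may have another complete labelling in which $\alpha$'s label differs, e.g.\ an even cycle $\alpha\leftrightarrow\beta$ with $\mathscr{L}(\alpha)=\mathbf{in}$. So Theorem~\ref{the:SolandL} gives no protection there. It is the emptiness of $\mathsf{Mod}(\top)$, not the structure of $\mathsf{Sol}(\mathscr{M}_{\mathscr{F}})$, that carries the argument.
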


$\mathrm{AC2}^m$ improves upon $\mathrm{BFT}$ by allowing \emph{compound hypothesis} and keeping \emph{witnesses} fixed to their actual labels. It validates a counterfactual if there \emph{exist} fixed witnesses that enable the intervention to alter the topic's label. Analyzing the specific topological properties of such witnesses is left for future work.

\begin{definition}[Modified Counterfactual Condition ($\mathrm{AC2}^m$)]\label{def:ac2(am)}
     Let $\psi$ and $\chi$ be two hypotheses with $\mathscr{A}_{\psi}\cap \mathscr{A}_{\chi}=\emptyset$, $\alpha\notin\mathscr{A}_{\chi}$, and $(\mathscr{M}_{\mathscr{F}},\mathscr{L})\Vdash \chi$. We say $\psi$ satisfies $\mathrm{AC2}^m_q$ iff:
     \[
    \exists \psi^{\prime}\in \mathsf{Mod}(\psi): \mathscr{M}_{\mathscr{F}}\models_{q}[\psi^{\prime}\wedge \chi]\neg \mathbf{v}(\alpha)\text{~for~}q\in \{\forall,\exists\}.
     \]
\end{definition}

\begin{example}[But-for vs. Actual Cause]\label{ex:butforvs.actualcause}
    Recall the AF $\mathscr{F}_{1}$ in Fig.~\ref{fig:preemption} and its $co$-labelling $\mathscr{L}$ depicted therein, where the topic argument $\eta$ is labelled $\mathbf{in}$. Let the suspected cause of $\mathbf{in}(\eta)$ be the hypothesis $\psi=\mathbf{in}(\alpha)$.

    $\mathbf{in}(\alpha)$ fails the but-for test $\mathrm{BFT}_{\exists}$: after intervening with $\mathbf{out}(\alpha)$, $\mathscr{M}_{\mathscr{F}_1}^{\mathbf{out}(\alpha)}$ has the unique solution $\mathscr{L}_1 = \big\langle \{\beta,\epsilon,\eta\}, \{\alpha, \omega, \gamma\}, \emptyset \big\rangle$, in which $\eta$ remains $\mathbf{in}$. Hence, $\neg\mathbf{in}(\eta)$ does not hold, and $\mathbf{in}(\alpha)$ does not qualify as a but‑for cause.

    Yet $\mathbf{in}(\alpha)$ passes $\mathrm{AC2}^m_{\exists}$ with witness $\chi=\mathbf{out}(\epsilon)$.
    Specifically, there exists a witness $\chi=\mathbf{out}(\epsilon)$ fixed to its actual label in $\mathscr{L}$, s.t. once we intervene with $\psi_1=\mathbf{out}(\alpha) \in \mathsf{Mod}(\psi)$ while holding $\chi$ fixed, the compound hypothesis $\psi_2 = \mathbf{out}(\alpha) \land \mathbf{out}(\epsilon)$ is formed.
    The intervened model $\mathscr{M}_{\mathscr{F}_1}^{\psi_2}$ then yields the unique solution $\mathscr{L}_2 = \big\langle \{\beta,\gamma\}, \{\alpha, \omega, \epsilon, \eta\}, \emptyset \big\rangle$, in which $\eta$ is labelled $\mathbf{out}$.
    Thus $\mathbf{in}(\alpha)$ is identified as an actual cause of $\mathbf{in}(\eta)$, overcoming the counter-intuition of the but-for test illustrated in Ex.~\ref{ex:preemption}.
\end{example}

Theorem~\ref{theorem:StrongtoWeak} establishes that for both $\mathrm{BFT}$ and 
$\mathrm{AC2}^m$ the strong variant implies the weak variant, while Theorem~\ref{theorem:threevariants} shows that every but-for cause is also an actual cause. Together, they characterize the logical relationships among the counterfactual conditions.

\begin{theorem}\label{theorem:StrongtoWeak}
    Let $\mathscr{M}_{\mathscr{F}}$ be an acceptability model with a topic argument $\alpha$. For any atomic hypothesis $\psi \in L_{\mathscr{UA}}^{-}$, if $\psi$ is a but-for cause for $\mathbf{v}(\alpha)$ under $\mathrm{BFT}_{\forall}$, then it is also a but-for cause for $\mathbf{v}(\alpha)$ under $\mathrm{BFT}_{\exists}$. For any hypothesis $\psi \in L_{\mathscr{UA}}^{-}$, if $\psi$ is an actual cause for $\mathbf{v}(\alpha)$ under $\mathrm{AC2}^m_{\forall}$, then it is also an actual cause for $\mathbf{v}(\alpha)$ under $\mathrm{AC2}^m_{\exists}$.

\end{theorem}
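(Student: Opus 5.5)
The plan is to reduce both claims to one lemma: \emph{every intervened model has at least one solution}. Given the lemma, the theorem follows directly from Def.~\ref{def:Model-Satisfaction}. If $\mathscr{M}_{\mathscr{F}}\models_{\forall}[\psi']\chi$, then $\mathscr{M}_{\mathscr{F}}^{\psi'}\models_{\forall}\chi$, so every $\mathscr{L}\in\mathsf{Sol}(\mathscr{M}_{\mathscr{F}}^{\psi'})$ satisfies $\chi$. Nonemptiness then gives some solution satisfying $\chi$, i.e.\ $\mathscr{M}_{\mathscr{F}}\models_{\exists}[\psi']\chi$.

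For $\mathrm{BFT}$ this is applied with the same witness $\psi'\in\mathsf{Mod}_{\mathbf{in/out}}(\psi)$ and $\chi=\neg\mathbf{v}(\alpha)$. For $\mathrm{AC2}^m$ it is applied with the same pair $(\psi',\chi)$ and the intervention $\psi'\wedge\chi$. This is a legal hypothesis because $\mathscr{A}_{\psi}\cap\mathscr{A}_{\chi}=\emptyset$, so no argument receives two labels. $\mathrm{AC1}$ does not depend on $q$, so it transfers unchanged.

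To prove the lemma, I would simulate the intervention by an ordinary AF and then invoke Theorem~\ref{the:SolandL}. Given $\psi=\bigwedge_i\mathbf{v_i}(\alpha_i)$, build $\mathscr{F}^{\psi}$ from $\mathscr{F}$ as follows. First delete all attacks into each $\alpha_i$. If $\mathbf{v_i}=\mathbf{out}$, add a fresh unattacked argument $\kappa_i$ attacking $\alpha_i$. If $\mathbf{v_i}=\mathbf{und}$, add a fresh self-attacking argument $\kappa_i$ attacking $\alpha_i$. If $\mathbf{v_i}=\mathbf{in}$, add nothing.

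By Dung's results, $\mathscr{F}^{\psi}$ has a complete labelling, for instance the grounded one. In any complete labelling of $\mathscr{F}^{\psi}$ the labels are forced:
\begin{itemize}
\item an unattacked $\kappa_i$ is $\mathbf{in}$, so its target $\alpha_i$ is $\mathbf{out}$;
\item a self-attacking $\kappa_i$ must be $\mathbf{und}$, so $\alpha_i$, whose only attacker is $\mathbf{und}$, is $\mathbf{und}$;
\item an unattacked $\alpha_i$ is $\mathbf{in}$.
\end{itemize}
Every non-intervened argument keeps exactly its original attackers, which all lie in $\mathscr{A}$. Hence the restriction of the labelling to $\mathscr{A}$ satisfies $f^{\psi}_{\beta}$ for every $\beta$, and is therefore a solution of $\mathscr{M}_{\mathscr{F}}^{\psi}$. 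As an alternative, I could run the monotone grounded-style iteration directly on $\mathscr{M}_{\mathscr{F}}^{\psi}$ in the information ordering, with $\mathbf{und}$ at the bottom. Each $f^{\psi}_{\alpha}$ is monotone in that ordering, and constants are trivially monotone, so a least fixed point exists on the finite lattice.

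The main obstacle is the Minimality condition $\mathrm{AC3}$, not the counterfactual step. $\mathrm{AC3}$ requires every proper sub-conjunction $\psi_J$ to fail $\mathrm{AC1}$ and $\mathrm{AC2}$. Passing from $\forall$ to $\exists$ makes $\mathrm{AC2}$ easier to satisfy, so a $\psi_J$ that failed $\mathrm{AC2}^m_{\forall}$ might pass $\mathrm{AC2}^m_{\exists}$, and minimality is not obviously monotone in the right direction.

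I would first check whether the paper's intended reading of $\mathrm{AC3}$ fixes the counterfactual condition independently of $q$, or whether the theorem is meant only as a statement about the counterfactual clause; under either reading the argument above completes the proof. Failing that, I would handle minimality as follows. If $\psi$ satisfies $\mathrm{AC1}$ and $\mathrm{AC2}^m_{\exists}$, then some sub-conjunction $\psi_J\subseteq\psi$ that is minimal with respect to $\mathrm{AC2}^m_{\exists}$ is an actual cause under $\exists$. I would then record that the theorem transfers the counterfactual property, with minimality taken relative to the variant in force. I would make this interpretive step explicit rather than hide it.
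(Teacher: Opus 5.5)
Your core step is the paper's own argument. For the same $\psi'$ (and, for $\mathrm{AC2}^m$, the same witness $\chi$), $\models_\forall$ gives $\models_\exists$ once the quantified solution set is non-empty, and $\mathrm{AC1}$ does not depend on $q$.

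The difference is that you prove the non-emptiness that is actually needed. The paper cites only the existence of a complete labelling of $\mathscr{F}$, i.e.\ $\mathsf{Sol}(\mathscr{M}_{\mathscr{F}})\neq\emptyset$. But $[\psi'\wedge\chi]$ quantifies over $\mathsf{Sol}(\mathscr{M}_{\mathscr{F}}^{\psi'\wedge\chi})$. Theorem~\ref{the:intervention-modification} maps solutions of an intervened model to labellings of $\mathscr{F}_{\psi}$, not conversely, so it cannot fill this step. Your simulating AF supplies exactly that converse, so your transfer of the counterfactual clause is complete where the paper's is not. Your fresh self-attacker for $\mathbf{und}$ works as well as the paper's self-loop. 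For the fixpoint variant, the information order on $\mathbb{V}^{\mathscr{A}}$ is not a lattice, since $\mathbf{in}$ and $\mathbf{out}$ have no join. Kleene iteration from the all-$\mathbf{und}$ labelling still climbs a finite chain to a fixed point, so that route works too.

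Your concern about $\mathrm{AC3}$ is justified, and the paper's proof does not address it. The paper evaluates minimality relative to the variant in force: Ex.~\ref{ex:selective} excludes supersets of $\mathbf{in}(\beta)$ because $\mathbf{in}(\beta)$ passes $\mathrm{AC2}^m_{\exists}$. Under that reading the second claim is false.

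Take $\mathscr{F}_2$ (Fig.~\ref{fig:runningexample}) with topic $\eta$ and $\psi=\mathbf{in}(\alpha)\wedge\mathbf{in}(\beta)$. Intervening with $\mathbf{out}(\alpha)\wedge\mathbf{out}(\beta)$ and $\chi=\top$ forces $\gamma$ $\mathbf{in}$ and $\eta$ $\mathbf{out}$ in the unique solution, so $\psi$ satisfies $\mathrm{AC2}^m_{\forall}$. Every proper sub-conjunction fails $\mathrm{AC2}^m_{\forall}$:
\begin{itemize}
\item $\top$ fails because $\mathsf{Mod}(\top)=\emptyset$.
\item $\mathbf{in}(\alpha)$ fails because the unattacked $\beta$ keeps $\gamma$ $\mathbf{out}$.
\item $\mathbf{in}(\beta)$ fails because the only available witnesses, $\mathbf{in}(\alpha)$ and $\mathbf{out}(\gamma)$, each force $\eta$ $\mathbf{in}$. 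With $\chi=\top$, the labelling with $\alpha,\eta$ $\mathbf{in}$ and $\gamma$ $\mathbf{out}$ remains a solution under both $\mathbf{out}(\beta)$ and $\mathbf{und}(\beta)$.
\end{itemize}
So $\psi$ is an actual cause under $\mathrm{AC2}^m_{\forall}$. It is not one under $\mathrm{AC2}^m_{\exists}$, because $\mathbf{in}(\beta)$ satisfies $\mathrm{AC1}$ and $\mathrm{AC2}^m_{\exists}$.

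The $\mathrm{BFT}$ half is unaffected, since a but-for cause has no minimality clause beyond atomicity. What survives for $\mathrm{AC2}^m$ is exactly your fallback. $\mathrm{AC1}\wedge\mathrm{AC2}^m_{\forall}$ implies $\mathrm{AC1}\wedge\mathrm{AC2}^m_{\exists}$, so some non-empty sub-conjunction of $\psi$ is an actual cause under $\mathrm{AC2}^m_{\exists}$. State the result that way rather than leaving the reading of $\mathrm{AC3}$ open.
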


\begin{theorem}\label{theorem:threevariants}
    Let $\mathscr{M}_{\mathscr{F}}$ be an acceptability model with a topic argument $\alpha$. For any atomic hypothesis $\psi\in L_{\mathscr{UA}}^{-}$ and $q\in \{\forall,\exists\}$, if $\psi$ is a but-for cause for $\mathbf{v}(\alpha)$ under $\mathrm{BFT}_q$, then it is also an actual cause for $\mathbf{v}(\alpha)$ under $\mathrm{AC2}^m_{q}$.
\end{theorem}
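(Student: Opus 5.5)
We prove Theorem~\ref{theorem:threevariants} by checking the three conditions of Def.~\ref{def:actualcauses} one by one for a but-for cause $\psi=\mathbf{v_1}(\alpha_1)$ of $\mathbf{v}(\alpha)$ under $\mathrm{BFT}_q$. Actuality and Minimality come directly from Proposition~\ref{pro:forbut-for}. A but-for cause is by definition evaluated with $(\mathscr{M}_{\mathscr{F}},\mathscr{L})\Vdash\psi\wedge\mathbf{v}(\alpha)$, which is exactly $\mathrm{AC1}$. Since $\psi$ is atomic, its only proper sub-conjunction is the empty conjunction $\top$, and Proposition~\ref{pro:forbut-for} already covers this case. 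So it only remains to show that $\psi$ satisfies $\mathrm{AC2}^m_q$.

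For $\mathrm{AC2}^m_q$ we take the trivial witness $\chi=\top$. This means $\mathscr{A}_{\chi}=\emptyset$, and we check the side conditions of Def.~\ref{def:ac2(am)}:
\begin{itemize}
\item $\mathscr{A}_{\psi}\cap\mathscr{A}_{\chi}=\emptyset$ holds trivially;
\item $\alpha\notin\mathscr{A}_{\chi}$ holds trivially;
\item $(\mathscr{M}_{\mathscr{F}},\mathscr{L})\Vdash\top$ holds, since $\top$ is satisfied by every labelling.
\end{itemize}
Base satisfaction of $\top$ is not listed explicitly in the semantics. We read it as the empty conjunction, consistent with the paper's convention $\mathscr{M}_{\mathscr{F}}=\mathscr{M}_{\mathscr{F}}^{\top}$.

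By $\mathrm{BFT}_q$ there is some $\psi'\in\mathsf{Mod}_{\mathbf{in/out}}(\psi)$ with $\mathscr{M}_{\mathscr{F}}\models_q[\psi']\neg\mathbf{v}(\alpha)$. Because $\mathsf{Mod}_{\mathbf{in/out}}(\psi)\subseteq\mathsf{Mod}(\psi)$, this $\psi'$ is also admissible in $\mathrm{AC2}^m_q$. The intervention $\psi'\wedge\top$ replaces exactly the same equations as $\psi'$ does (Def.~\ref{def:Intervention}). Hence $\mathscr{M}_{\mathscr{F}}^{\psi'\wedge\top}=\mathscr{M}_{\mathscr{F}}^{\psi'}$, and the two models have the same solution set. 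By Def.~\ref{def:Model-Satisfaction}, $\mathscr{M}_{\mathscr{F}}\models_q[\psi'\wedge\top]\neg\mathbf{v}(\alpha)$ holds for the same $q$, so $\mathrm{AC2}^m_q$ is satisfied.

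The main obstacle is justifying that $\top$ may serve as a witness. If one insists that $\chi$ be a nonempty conjunction, we would instead try $\chi=\mathbf{v}(\alpha)$ or some argument unrelated to $\alpha$, but such choices can alter the solutions. The fallback in that case is a witness $\chi=\mathscr{L}(\delta)(\delta)$, where $\delta$ is an argument that is not relevant to $\alpha$ and is not $\alpha_1$. Fixing such a $\delta$ to its actual label leaves the labels of the arguments relevant to $\alpha$ unchanged across solutions. This fallback needs a locality argument and fails if no such $\delta$ exists. We therefore commit to $\chi=\top$, which the grammar of hypotheses explicitly permits.
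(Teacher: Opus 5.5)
Your proposal is correct and follows essentially the same route as the paper's proof. You get $\mathrm{AC1}$ from the but-for setup and $\mathrm{AC3}$ from atomicity, then obtain $\mathrm{AC2}^m_q$ with the witness $\chi=\top$, using $\mathscr{M}_{\mathscr{F}}^{\psi'\wedge\top}=\mathscr{M}_{\mathscr{F}}^{\psi'}$ and $\mathsf{Mod}_{\mathbf{in/out}}(\psi)\subseteq\mathsf{Mod}(\psi)$. Your observation about the empty sub-conjunction is slightly more careful than the paper's claim that an atomic hypothesis has no proper sub-conjunctions. It is settled directly by $\mathsf{Mod}(\top)=\emptyset$, so $\top$ cannot satisfy $\mathrm{AC2}^m$, and the fallback paragraph is unnecessary.
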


\begin{figure}[htbp]
        \centering
        \begin{tikzpicture}[scale=0.88,->,>=latex,
        auto,
        every node/.style={draw=black,circle,minimum size=0.7cm,inner sep=0pt},
    outnode/.style={text=black, pattern=north east lines, pattern color=gray!50},
    innode/.style={fill=gray!130, text=white},
    undnode/.style={fill=gray!50, text=white}
        ]
            \node[innode] (a1) at (-3.6,0.9) {$\alpha$};
            \node[outnode] (a2) at (-3.6,-0.9) {$\omega$};
            \node[innode] (b1) at (-1.8,0.9) {$\beta_{1}$};
            \node[outnode] (b2) at (-1.8,-0.9) {$\beta_{2}$};
            \node[outnode] (e3) at (0,0) {$\eta_{3}$};
            \node[innode] (e1) at (1.8,-0.9) {$\eta_{1}$};
            \node[outnode] (e2) at (1.8,0.9) {$\eta_{2}$};
            \node[outnode] (c2) at (3.6,-0.9) {$\epsilon$};
            \node[innode] (c1) at (3.6,0.9) {$\gamma$};
    
            \draw (a1) -- (a2);
            \draw (a2) -- (b2);
            \draw[bend left] (b2) to (b1);
            \draw[bend left] (b1) to (b2);
            \draw (b1) -- (e3);
            \draw (b2) -- (e3);
            \draw (e3) -- (e1);
            \draw (e1) -- (e2);
            \draw (e2) -- (e3);
            \draw (c1) -- (c2);
            \draw (c1) -- (e2);
            \draw (c2) -- (e1);
                     
        \end{tikzpicture}
        \caption{For Ex.~\ref{ex:forAC2m} ($\mathscr{F}_{3}$)}
        \label{fig:forAC2m}
\end{figure}

\begin{example}\label{ex:forAC2m}
    Consider the AF $\mathscr{F}_3$ and its $co$-labelling $\mathscr{L}$ in Fig.~\ref{fig:forAC2m}. Although $\alpha$, $\beta_1$, and $\gamma$ each have an even-length path to $\eta_1$, not all are actual causes: $\alpha$'s defense is ''blocked'' by $\beta_1$, whereas $\beta_1$ and $\gamma$ are both necessary for $\eta_1$'s acceptance.

    First, test $\mathbf{in}(\gamma)$. Intervening with $\mathbf{out}(\gamma)$ yields a intervened model $\mathscr{M}_{\mathscr{F}_3}^{\mathbf{out}(\gamma)}$, one of whose solution is $\mathscr{L}^{\prime}=\langle \{\epsilon,\eta_2,\beta_1,\alpha\},\{\gamma,\eta_1,\eta_3,\beta_2,\omega\},\emptyset \rangle$, where $\eta_1$'s label becomes $\mathbf{out}$. Hence, $\mathbf{in}(\gamma)$ is a but-for cause (and thus also satisfies $\mathrm{AC2}^m$).

    Next, test $\mathbf{in}(\beta_1)$. Under a but-for intervention $\mathbf{out}(\beta_1)$ alone, the unique labelling is $\mathscr{L}^{\prime\prime}=\langle\{\alpha,\beta_2,\eta_1,\gamma\},\{\omega,\beta_1,\eta_3,\eta_2,\epsilon\},\emptyset \rangle$,
    and $\eta_1$'s label remains. The blocked path from $\alpha$ is now active, preserving $\eta_1$'s acceptance.

    To prevent this, $\mathrm{AC2}^m$ also fixes $\beta_2$'s label to its original label $\mathbf{out}$. In $\mathscr{M}_{\mathscr{F}_3}^{\mathbf{out}(\beta_1)\wedge \mathbf{out}(\beta_2)}$, one solution is $\mathscr{L}^{\prime\prime\prime}=\langle \{\alpha,\gamma\},\{\omega,\beta_1,\beta_2,\epsilon\},\{\eta_1,\eta_2,\eta_3\} \rangle$,
    where $\eta_1$ is now labelled $\mathbf{und}$. Thus, $\mathbf{in}(\beta_1)$ qualifies as an actual cause under $\mathrm{AC2}^m$, while the stricter but-for condition $\mathrm{BFT}$ fails.
\end{example}

\subsection{Counterfactual Explanation}
\label{subsec:counterfactualexpl}

We now instantiate the formula-based explanation strategy $\operatorname{F-Expl}$ with \emph{counterfactual explanation} ($\operatorname{Counter}$) and \emph{but-for explanation} ($\operatorname{ButFor}$). The latter serves as a baseline to further justify the former as the appropriate instantiation.

While the notion of actual cause (Def.~\ref{def:actualcauses}) is independent of specific labels and accommodates any status of the topic argument ($\mathbf{in}$, $\mathbf{out}$, or $\mathbf{und}$), we restrict our focus here to explaining its \emph{acceptance} ($\mathbf{in}$). This restriction is motivated by the fact that standard formal properties (e.g., Relevance, Existence) and the prevailing explanation methods compared in Section~\ref{sec:relatedwork} are predominantly formulated for explaining \emph{accepted} arguments.

We adopt $\mathrm{AC2}^m$ for actual causes, and call the resulting method \emph{counterfactual explanation}. Let $\mathfrak{Act}(\mathscr{M}_{\mathscr{F}},\mathscr{L},\mathbf{in}(\alpha))$ be the set of hypotheses $\psi \in L_{\mathscr{UA}}^{-}$ that satisfy $\mathrm{AC2}^m_{\exists}$ (the strong version $\mathrm{AC2}^m_{\forall}$ is left for future work) for $\mathbf{in}(\alpha)$ under the solution $\mathscr{L}$ of $\mathscr{M}_{\mathscr{F}}$.

\begin{definition}[Counterfactual Explanation]
    Let $\mathscr{F}=(\mathscr{A},\mathscr{R})$ be an AF and $\mathscr{M}_{\mathscr{F}}$ its acceptability model. Consider $\alpha\in \mathscr{A}$ the topic, $\mathscr{L}$ a solution of $\mathscr{M}_{\mathscr{F}}$ with $\mathscr{L}(\alpha)=\mathbf{in}$. Then, 
    \[\operatorname{Counter}(\mathscr{F},\alpha)=\bigcup_{\mathscr{L}\in\mathsf{Sol}(\mathscr{M}_{\mathscr{F}})} \mathfrak{Act}(\mathscr{M}_{\mathscr{F}},\mathscr{L},\mathbf{in}(\alpha)).\]
\end{definition}


A basic requirement for any explanation is that the arguments it involves be relevant (recall Def.~\ref{def:path-relevant}) to the topic. The following proposition confirms that our counterfactual explanation method satisfies the \emph{Relevance} property.

\begin{proposition}\label{pro:relevant}
    Given an AF $\mathscr{F}=(\mathscr{A},\mathscr{R})$ and the topic $\alpha\in \mathscr{A}$. If the hypothesis $\psi\in \operatorname{Counter}(\mathscr{F},\alpha)$, then for all $\alpha_i\in \mathscr{A}_{\psi}$, it holds that $\alpha_i$ is relevant to $\alpha$.
\end{proposition}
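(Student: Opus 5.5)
The plan is to argue by contradiction through Minimality ($\mathrm{AC3}$), using the observation that the label of the topic $\alpha$ depends only on the equations of arguments relevant to $\alpha$. Let $\mathrm{Anc}(\alpha)=\{\beta\in\mathscr{A}\mid \beta \text{ is relevant to } \alpha\}$. This set is closed under attackers: if $\beta\in\mathrm{Anc}(\alpha)$ and $\gamma\in\beta^{-}$, then $\gamma\in\mathrm{Anc}(\alpha)$. Hence for every $\beta\in\mathrm{Anc}(\alpha)$, the equation $f^{\varphi}_{\beta}$ of any intervened model $\mathscr{M}^{\varphi}_{\mathscr{F}}$ only reads labels inside $\mathrm{Anc}(\alpha)$.

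\textbf{Locality lemma.} The first step is the following. For any hypothesis $\varphi$, let $\varphi|_{\mathrm{Anc}}$ be its sub-conjunction over $\mathrm{Anc}(\alpha)$. Then
\[
\{\mathscr{L}|_{\mathrm{Anc}(\alpha)}\mid \mathscr{L}\in\mathsf{Sol}(\mathscr{M}^{\varphi}_{\mathscr{F}})\}=\{\mathscr{L}|_{\mathrm{Anc}(\alpha)}\mid \mathscr{L}\in\mathsf{Sol}(\mathscr{M}^{\varphi|_{\mathrm{Anc}}}_{\mathscr{F}})\}.
\]
\begin{itemize}
\item \emph{Restriction.} Restricting any solution to $\mathrm{Anc}(\alpha)$ gives a solution of the sub-model on $\mathrm{Anc}(\alpha)$, because the equations there are identical in both models.
\item \emph{Extension.} Any solution $\mathscr{L}_0$ of the sub-model extends to a full solution. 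Fix the $\mathrm{Anc}(\alpha)$ part to $\mathscr{L}_0$ and clamp the intervened arguments. Then take the least fixpoint of the three-valued, grounded-style iteration on the remaining arguments. This is monotone w.r.t. the information order, so a fixpoint exists, as in the proof of Theorem~\ref{the:SolandL}.
\end{itemize}
Consequently, whether $\mathscr{M}_{\mathscr{F}}\models_{q}[\varphi]\neg\mathbf{v}(\alpha)$ holds is unchanged if we delete from $\varphi$ all conjuncts about arguments outside $\mathrm{Anc}(\alpha)$.

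\textbf{Main argument.} Let $\psi=\bigwedge_{i=1}^k\mathbf{in}\ldots$, more precisely $\psi=\bigwedge_{i=1}^k\mathbf{v_i}(\alpha_i)\in\operatorname{Counter}(\mathscr{F},\alpha)$, with witness $\chi$ and modification $\psi'$ satisfying $\mathrm{AC2}^m_\exists$. Suppose some $\alpha_i\notin\mathrm{Anc}(\alpha)$, and let $J=\{i\mid\alpha_i\in\mathrm{Anc}(\alpha)\}\subsetneq\{1,\dots,k\}$. The sub-conjunction $\psi_J$ trivially satisfies $\mathrm{AC1}$. The goal is to show that some proper sub-conjunction also satisfies $\mathrm{AC2}^m_\exists$, which contradicts $\mathrm{AC3}$.
\begin{itemize}
\item \emph{Case 1: $\psi'$ changes the label of some $\alpha_j$ with $j\in J$.} Then $\psi'_J\in\mathsf{Mod}(\psi_J)$. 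By the locality lemma, $[\psi'_J\wedge\chi]$ and $[\psi'\wedge\chi]$ induce the same labels of $\alpha$. So $\psi_J$ satisfies $\mathrm{AC2}^m_\exists$ with the same witness $\chi$.
\item \emph{Case 2: $\psi'$ changes only irrelevant conjuncts.} Pick one changed index $i\notin J$. If $J\cup\{i\}$ is still proper, then $\psi'_{J\cup\{i\}}\in\mathsf{Mod}(\psi_{J\cup\{i\}})$, and locality again transfers $\mathrm{AC2}^m_\exists$, since both models agree on $\mathrm{Anc}(\alpha)$ with $\psi'$.
\end{itemize}

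\textbf{Main obstacle.} The remaining sub-case is expected to be the hardest: $\psi$ has exactly one irrelevant conjunct $\alpha_i$, and $\psi'$ changes only that conjunct. Here locality shows that $[\psi'\wedge\chi]$ and $[\psi_J\wedge\chi]$ agree on $\mathrm{Anc}(\alpha)$, so fixing only actual labels already produces $\neg\mathbf{v}(\alpha)$ in some solution. The plan is to then move $\psi_J$ into the witness, i.e. use $\chi\wedge\psi_J$ as witness for the atomic $\mathbf{v_i}(\alpha_i)$, and exhibit a proper sub-conjunction passing $\mathrm{AC2}^m_\exists$. For this one must check the side conditions of Def.~\ref{def:ac2(am)}, namely disjointness and $\alpha\notin\mathscr{A}_{\chi}$. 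The first thing to try is a case split on whether $\alpha\in\mathscr{A}_{\psi_J}$.

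If this route fails, the fallback is to use Theorem~\ref{the:SolandL} and the structure of $co$-labellings to argue as follows. An intervention consisting solely of actual labels on $\mathrm{Anc}(\alpha)$ cannot dislodge $\mathbf{v}(\alpha)$ in any solution. That would rule out the sub-case outright.
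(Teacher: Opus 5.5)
Your locality lemma is sound: the extension step works because every $f^{\varphi}_{\beta}$, constant or not, is monotone in the information order. Cases 1 and 2 do contradict $\mathrm{AC3}$.

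The remaining sub-case also closes when $k\ge 2$, exactly as you planned:
if $\alpha\notin\mathscr{A}_{\psi_J}$, the proper sub-conjunction $\mathbf{v_i}(\alpha_i)$ passes $\mathrm{AC2}^m_\exists$ with witness $\chi\wedge\psi_J$, because $\mathbf{v_i^{\prime}}(\alpha_i)\wedge\chi\wedge\psi_J$ is literally $\psi'\wedge\chi$; if $\alpha\in\mathscr{A}_{\psi_J}$, the conjunct $\mathbf{v}(\alpha)$ passes $\mathrm{AC1}$ and $\mathrm{AC2}^m_\exists$ with $\chi=\top$.

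The genuine gap is $k=1$: an atomic $\psi=\mathbf{v_i}(\alpha_i)$ with $\alpha_i$ irrelevant. Then $J=\emptyset$ and the only proper sub-conjunction is $\top$, which has $\mathsf{Mod}(\top)=\emptyset$. So Minimality can never be violated, and everything rests on your fallback claim. That claim is false for $\models_\exists$. An intervention by actual labels keeps $\mathscr{L}$ as a solution, but it can have other solutions, and $\models_\exists$ needs only one of them to satisfy $\neg\mathbf{v}(\alpha)$. Already $\chi=\top$ does this whenever some $co$-labelling disagrees with $\mathscr{L}$ on $\alpha$.

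Consequently the gap cannot be filled, because the statement is false as written. Take $\mathscr{A}=\{\alpha,\beta,\delta\}$, $\mathscr{R}=\{(\alpha,\beta),(\beta,\alpha)\}$, $\mathscr{L}=\langle\{\alpha,\delta\},\{\beta\},\emptyset\rangle$ and $\psi=\mathbf{in}(\delta)$. $\mathrm{AC1}$ holds. Use $\chi=\top$ (which the paper itself uses when showing but-for causes are actual causes) and $\psi'=\mathbf{out}(\delta)$. The model $\mathscr{M}_{\mathscr{F}}^{\mathbf{out}(\delta)}$ has the solution $\langle\{\beta\},\{\alpha,\delta\},\emptyset\rangle$, so $\mathrm{AC2}^m_\exists$ holds, and $\mathrm{AC3}$ is trivial. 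Hence $\mathbf{in}(\delta)\in\operatorname{Counter}(\mathscr{F},\alpha)$ although $\delta$ is not relevant to $\alpha$.

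By your own locality lemma this is general: as soon as $\alpha$ is not in the grounded extension, every atomic $\mathbf{v'}(\gamma)$ true in $\mathscr{L}$ with $\gamma$ irrelevant to $\alpha$ lands in $\operatorname{Counter}(\mathscr{F},\alpha)$. The paper's proof fails at the same spot. Its Case 1 argues that an irrelevant argument cannot ``affect'' the label of $\alpha$, overlooking that $\mathrm{AC2}^m_\exists$ can be met by a solution differing from $\mathscr{L}$ for reasons unrelated to the intervention.

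What your argument does prove is a corrected statement, under either of two assumptions:
\begin{itemize}
\item For $\mathrm{AC2}^m_\forall$, the relevant form of your fallback claim is true. $\mathscr{L}$ itself solves every intervention by actual labels, so not all solutions satisfy $\neg\mathbf{v}(\alpha)$.
\item For $\mathrm{AC2}^m_\exists$, the claim is true when $\alpha$ is in the grounded extension. An induction along the grounded iteration shows that every solution of an intervention by actual labels gives $\alpha$ its grounded label.
\end{itemize}
Under either assumption your proof is complete.
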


\emph{Monotonicity} is a well-known property in argument-based explanation strategy~\cite{ulbrichtStrongExplanations2021}: if a set of arguments qualifies as an explanation, any superset should also be one. The counterfactual explanation method does not satisfy \emph{Monotonicity}, as Minimality requires that no proper sub-conjunction of the cause satisfies Actuality ($\mathrm{AC1}$) and Modified Counterfactual Condition ($\mathrm{AC2}^m$).

\begin{proposition}\label{pro:notMonotonicity}
    Given an AF $\mathscr{F}=(\mathscr{A},\mathscr{R})$ and the topic $\alpha\in \mathscr{A}$, for any hypotheses $\psi,\psi^{\prime}\in L_{\mathscr{UA}}^{-}$ with $\mathscr{A}_{\psi}\subsetneq \mathscr{A}_{\psi^{\prime}}$, if $\psi\in \operatorname{Counter}(\mathscr{F},\alpha)$ then $\psi^{\prime}\notin \operatorname{Counter}(\mathscr{F},\alpha)$.
\end{proposition}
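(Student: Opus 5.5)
The plan is to argue by contradiction directly from the Minimality clause $\mathrm{AC3}$ of Def.~\ref{def:actualcauses}. Suppose $\psi\in\operatorname{Counter}(\mathscr{F},\alpha)$ and also $\psi^{\prime}\in\operatorname{Counter}(\mathscr{F},\alpha)$ with $\mathscr{A}_{\psi}\subsetneq\mathscr{A}_{\psi^{\prime}}$. Then $\psi^{\prime}$ is an actual cause for $\mathbf{in}(\alpha)$ under some solution $\mathscr{L}^{\prime}$ of $\mathscr{M}_{\mathscr{F}}$, so it satisfies $\mathrm{AC1}$ w.r.t.\ $\mathscr{L}^{\prime}$. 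Likewise, $\psi$ is an actual cause under some solution $\mathscr{L}$. The aim is to exhibit a proper sub-conjunction of $\psi^{\prime}$ that satisfies both $\mathrm{AC1}$ and $\mathrm{AC2}^m$. This would violate $\mathrm{AC3}$ for $\psi^{\prime}$.

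The natural candidate is $\psi$ itself, and the first step is to show that $\psi$ is literally a sub-conjunction of $\psi^{\prime}$. Since $(\mathscr{M}_{\mathscr{F}},\mathscr{L}^{\prime})\Vdash\psi^{\prime}$, each $\alpha_i\in\mathscr{A}_{\psi^{\prime}}$ carries in $\psi^{\prime}$ exactly the label $\mathscr{L}^{\prime}(\alpha_i)$. Hence $\psi_J:=\bigwedge_{\alpha_i\in\mathscr{A}_{\psi}}\mathscr{L}^{\prime}(\alpha_i)(\alpha_i)$ is a proper sub-conjunction of $\psi^{\prime}$, and it trivially satisfies $\mathrm{AC1}$ under $\mathscr{L}^{\prime}$. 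It remains to check that $\psi_J$ satisfies $\mathrm{AC2}^m$ under $\mathscr{L}^{\prime}$. If $\psi_J=\psi$ and the witness $\chi$ used for $\psi$ is also actual in $\mathscr{L}^{\prime}$, the $\mathrm{AC2}^m$ condition transfers verbatim: the same $\psi^{\ast}\in\mathsf{Mod}(\psi)$ and the same $\chi$ give $\mathscr{M}_{\mathscr{F}}\models_{\exists}[\psi^{\ast}\wedge\chi]\neg\mathbf{in}(\alpha)$. This holds because $\mathrm{AC2}^m$ refers to the labelling only through the requirement $(\mathscr{M}_{\mathscr{F}},\mathscr{L})\Vdash\chi$. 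The same observation also applies to the $\forall$ variant, but the definition of $\operatorname{Counter}$ only uses $\exists$.

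The main obstacle is the mismatch between solutions: $\psi$ may be certified under $\mathscr{L}$ while $\psi^{\prime}$ is certified under $\mathscr{L}^{\prime}\neq\mathscr{L}$, so $\psi$'s labels and witness need not be actual in $\mathscr{L}^{\prime}$. I would handle this in two stages. First, I would treat the intended reading, where both memberships are taken relative to a common solution $\mathscr{L}$ (the paper's phrasing "the cause contains no redundant parts" suggests this). Then $\psi$ and its witness are both actual in $\mathscr{L}$, and the contradiction with $\mathrm{AC3}$ is immediate. Second, in the unified multi-solution case, I would argue that $\psi^{\prime}$'s own $\mathrm{AC2}^m$ witness can be enlarged. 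Specifically, I would try to show that $\psi_J$ passes $\mathrm{AC2}^m$ under $\mathscr{L}^{\prime}$ by taking as its witness the remaining conjuncts of $\psi^{\prime}$ (held at their actual $\mathscr{L}^{\prime}$ labels) together with $\psi^{\prime}$'s witness. I would then compare this with the modification $\psi^{\prime\ast}$ of $\psi^{\prime}$; where $\psi^{\prime\ast}$ alters a conjunct outside $\mathscr{A}_{\psi}$, that case needs separate treatment. If this second stage cannot be closed, I would state the proposition as holding per solution $\mathscr{L}$, which is what the first stage establishes.
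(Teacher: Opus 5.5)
Your first stage is the paper's proof. The paper's argument is the single observation that $\psi$ would be a proper sub-conjunction of $\psi^{\prime}$ satisfying $\mathrm{AC1}$ and $\mathrm{AC2}^m$, contradicting $\mathrm{AC3}$. It tacitly assumes what you make explicit: both memberships are certified by the same solution, so $\mathrm{AC1}$ makes $\psi$ a literal sub-conjunction of $\psi^{\prime}$ and keeps $\psi$'s witness actual. For that case your argument is complete, and more careful than the paper's.

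However, $\operatorname{Counter}(\mathscr{F},\alpha)$ is a union over solutions. For $\mathscr{L}\neq\mathscr{L}^{\prime}$ your proposal has a genuine gap, and the stage-two sketch cannot close it. By $\mathrm{AC3}$ for $\psi^{\prime}$, any modification $\psi^{\prime\ast}$ through which $\psi^{\prime}$ satisfies $\mathrm{AC2}^m$ must change \emph{every} conjunct. Otherwise the changed conjuncts, with the unchanged ones moved into the witness, already form a proper sub-conjunction passing $\mathrm{AC1}$ and $\mathrm{AC2}^m$. So your ``separate treatment'' case is the only case. In that case, holding the conjuncts outside $\mathscr{A}_{\psi}$ at their actual labels is a different intervention, and $\psi^{\prime}$'s counterfactual tells you nothing about $\psi_J$. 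For instance, in $\mathscr{F}_4$ (topic $\eta$), $\mathbf{in}(\alpha)\wedge\mathbf{in}(\beta)$ passes while $\mathbf{in}(\alpha)$ with $\mathbf{in}(\beta)$ held fixed fails. Stage two never uses that $\psi$ is an actual cause under $\mathscr{L}$, so it needs a new ingredient to carry that information from $\mathscr{L}$ to $\mathscr{L}^{\prime}$. Retreating to a per-solution statement proves less than what is claimed.

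The missing ingredient is monotonicity in the information order $\mathbf{und}\sqsubseteq\mathbf{in}$, $\mathbf{und}\sqsubseteq\mathbf{out}$. Every $f_{\beta}$, and every constant, is monotone for this order. Hence each intervened model has a least solution, obtained by iterating from the all-$\mathbf{und}$ labelling, and it lies below all of that model's solutions. Moreover, suppose $\phi_1$ arises from $\phi_0$ by raising stipulated $\mathbf{und}$'s to $\mathbf{in}$/$\mathbf{out}$, or by additionally fixing arguments at labels above the least solution of $\mathscr{M}_{\mathscr{F}}^{\phi_0}$. Then that least solution lies below every solution of $\mathscr{M}_{\mathscr{F}}^{\phi_1}$. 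Labels actual in $\mathscr{L}$ qualify whenever $\phi_0$ lies below $\psi$, since $\mathscr{L}$ solves $\mathscr{M}_{\mathscr{F}}^{\psi}$.

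For $\models_{\exists}$ and topic label $\mathbf{in}$ this gives two facts:
(i) An actual cause with at least two conjuncts carries only $\mathbf{in}/\mathbf{out}$ labels. Otherwise either its $\mathbf{in}/\mathbf{out}$ part reset to $\mathbf{und}$, or a single $\mathbf{und}$ conjunct, already passes with the rest held fixed.
(ii) If $\psi$ is an $\mathbf{in}/\mathbf{out}$ actual cause, then $\mathscr{M}_{\mathscr{F}}\models_{\exists}[\bigwedge_{\beta\in\mathscr{A}_{\psi}}\mathbf{und}(\beta)]\neg\mathbf{in}(\alpha)$. This condition mentions neither $\mathscr{L}$ nor $\psi$'s labels, and needs no witness.

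The proof then closes as follows. By (i), $\psi^{\prime}$ is $\mathbf{in}/\mathbf{out}$.
If $\psi$ is also $\mathbf{in}/\mathbf{out}$, then by (ii) $\psi^{\prime}|_{\mathscr{A}_{\psi}}$ satisfies $\mathrm{AC1}$ and $\mathrm{AC2}^m_{\exists}$ under $\mathscr{L}^{\prime}$ (modify everything to $\mathbf{und}$, take $\chi=\top$), violating $\mathrm{AC3}$.
Otherwise $\psi=\mathbf{und}(\beta)$. Then $\psi^{\prime}|_{\{\beta\}}$ fails $\mathrm{AC2}^m_{\exists}$, so the least solution of $\mathscr{M}_{\mathscr{F}}^{\mathbf{und}(\beta)}$ labels $\alpha$ $\mathbf{in}$. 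Since $\mathscr{L}$ solves that model, every $\mathscr{M}_{\mathscr{F}}^{\mathbf{v}(\beta)\wedge\chi}$ with $\mathscr{L}$-actual $\chi$ keeps $\alpha$ $\mathbf{in}$. So $\psi$ fails $\mathrm{AC2}^m_{\exists}$, a contradiction.
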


For comparison, we also define the \emph{but-for explanation}.
Let $\mathfrak{BF}(\mathscr{M}_{\mathscr{F}},\mathscr{L},\mathbf{in}(\alpha))$ denote the set of all atomic hypotheses $\psi \in L_{\mathscr{UA}}^{-}$ that satisfy $\mathrm{BFT}_{\exists}$ for $\mathbf{in}(\alpha)$ under the solution $\mathscr{L}$ of $\mathscr{M}_{\mathscr{F}}$.
Consistent with the counterfactual explanation method, we restrict our focus to the weak version $\mathrm{BFT}_{\exists}$.
Then, $\operatorname{ButFor}(\mathscr{F},\alpha)=\bigcup _{\mathscr{L}\in\mathsf{Sol}(\mathscr{M}_{\mathscr{F}})} \mathfrak{BF}(\mathscr{M}_{\mathscr{F}},\mathscr{L},\mathbf{in}(\alpha))$.

The following proposition compares the two explanation methods w.r.t \emph{Non-trivial Explanation}: the existence of an explanation that involves accepted arguments beyond the topic itself when the topic is a sink node. It shows that counterfactual explanation always provides such an explanation, whereas the but-for explanation may fail to do so.

\begin{proposition}\label{pro:non-trivialexpl.}
    Let $\mathscr{F}=(\mathscr{A},\mathscr{R})$ be an AF and $\alpha\in \mathscr{A}$ the topic s.t. $\alpha^{-}\neq \emptyset$, $\alpha^{+}=\emptyset$ and $\mathscr{L}(\alpha)=\mathbf{in}$ for some $\mathscr{L}\in \mathscr{L}_{co}(\mathscr{F})$.
    \begin{enumerate}
        \item For any $\mathscr{F}$ and $\alpha$ satisfying the above conditions, there exists $\psi\in \operatorname{Counter}(\mathscr{F},\alpha)$, s.t. (i) $\mathscr{A}_{\psi}\setminus\{\alpha\}\neq\emptyset$, and (ii) for all $\beta\in \mathscr{A}_\psi$, $\mathscr{L}(\beta)=\mathbf{in}$.
        \item There exists $\mathscr{F}$ and $\alpha$ satisfying the above conditions, s.t. for all $\psi=\mathbf{in}(\beta)$, if $\psi\in\operatorname{ButFor}(\mathscr{F},\alpha)$, then $\mathscr{A}_{\psi}\setminus\{\alpha\}=\emptyset$.
    \end{enumerate}
\end{proposition}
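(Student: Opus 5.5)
The plan is to prove the two items separately. Item 1 uses a ``clamp everything else'' witness together with a minimisation step. Item 2 uses an overdetermination counterexample.

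\emph{Item 1.} Fix $\mathscr{L}\in\mathscr{L}_{co}(\mathscr{F})$ with $\mathscr{L}(\alpha)=\mathbf{in}$. Since $\alpha^-\neq\emptyset$, pick $\beta\in\alpha^-$. By Theorem~\ref{the:SolandL}, $\mathscr{L}$ is a solution, so $\mathscr{L}(\beta)=\mathbf{out}$. Hence the set $G=\{\gamma\in\beta^-\mid \mathscr{L}(\gamma)=\mathbf{in}\}$ is nonempty.

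Because $\alpha^+=\emptyset$, we have $\beta\neq\alpha$ and $\gamma\neq\alpha$ for every $\gamma\in G$. Indeed, $\beta$ and each $\gamma$ have an outgoing attack and $\alpha$ has none. By the same reason, $\alpha$ attacks nothing, including itself.

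Consider $\Psi=\bigwedge_{\gamma\in G}\mathbf{in}(\gamma)$ and the witness $\chi=\bigwedge_{\delta\in\mathscr{A}\setminus(G\cup\{\alpha,\beta\})}\mathscr{L}(\delta)(\delta)$. The witness $\chi$ is actual, disjoint from $\mathscr{A}_\Psi$, and avoids $\alpha$. Take $\Psi'\in\mathsf{Mod}(\Psi)$ that sets every $\gamma\in G$ to $\mathbf{out}$. In $\mathscr{M}_{\mathscr{F}}^{\Psi'\wedge\chi}$, only $\alpha$ and $\beta$ keep their original equations.

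\begin{itemize}
\item The label of $\beta$ does not depend on $\alpha$, because $\alpha^+=\emptyset$. All attackers of $\beta$ are now fixed to labels different from $\mathbf{in}$: the members of $G$ are $\mathbf{out}$, and the remaining attackers keep their actual labels, which are not $\mathbf{in}$ by the definition of $G$.
\item Therefore $\beta$ is forced to $\mathbf{in}$ or $\mathbf{und}$.
\item Consequently $\alpha$ is forced to $\mathbf{out}$ or $\mathbf{und}$.
\end{itemize}

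The model therefore has a unique solution, and it satisfies $\neg\mathbf{in}(\alpha)$. So $\Psi$ satisfies $\mathrm{AC1}$ and $\mathrm{AC2}^m_\exists$ (even $\mathrm{AC2}^m_\forall$).

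$\Psi$ need not be minimal, so the next step is to choose a sub-conjunction $\psi$ of $\Psi$ that is minimal w.r.t.\ satisfying $\mathrm{AC1}$ and $\mathrm{AC2}^m_\exists$. Such a $\psi$ exists by finiteness. It is nonempty, since $\mathsf{Mod}(\top)=\emptyset$ means $\top$ fails $\mathrm{AC2}^m$. Then $\psi$ satisfies $\mathrm{AC3}$, so $\psi\in\mathfrak{Act}\subseteq\operatorname{Counter}(\mathscr{F},\alpha)$. Moreover, $\mathscr{A}_\psi\subseteq G$ is nonempty, excludes $\alpha$, and consists of $\mathbf{in}$-labelled arguments, which gives (i) and (ii).

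The main obstacle is this minimality step. One has to check that $\mathrm{AC3}$ in Def.~\ref{def:actualcauses} is exactly ``no proper sub-conjunction satisfies $\mathrm{AC1}\wedge\mathrm{AC2}^m$''. One also has to check that each sub-conjunction may use its own witness. Both follow from the definitions.

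\emph{Item 2.} Use overdetermination. Let $\mathscr{F}$ have arguments $\gamma_1,\gamma_2,\beta,\alpha$ and attacks $(\gamma_1,\beta)$, $(\gamma_2,\beta)$, $(\beta,\alpha)$. This $\mathscr{F}$ has the unique $co$-labelling $\langle\{\gamma_1,\gamma_2,\alpha\},\{\beta\},\emptyset\rangle$, and it meets the hypotheses of the proposition. Check each atomic $\mathbf{in}(\cdot)$ hypothesis in turn:
\begin{itemize}
\item $\mathbf{in}(\beta)$ fails $\mathrm{AC1}$, since $\beta$ is $\mathbf{out}$.
\item For $\mathbf{in}(\gamma_1)$, intervening with $\mathbf{out}(\gamma_1)$ (and likewise $\mathbf{und}(\gamma_1)$, though $\mathsf{Mod}_{\mathbf{in/out}}$ only needs $\mathbf{out}$) leaves $\gamma_2$ $\mathbf{in}$, hence $\beta$ $\mathbf{out}$ and $\alpha$ $\mathbf{in}$, in the unique solution. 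So $\mathrm{BFT}_\exists$ fails.
\item $\mathbf{in}(\gamma_2)$ fails by the symmetric argument.
\end{itemize}
Hence the only possible member of $\operatorname{ButFor}$ of the form $\mathbf{in}(\cdot)$ is $\mathbf{in}(\alpha)$, and for it $\mathscr{A}_\psi\setminus\{\alpha\}=\emptyset$.
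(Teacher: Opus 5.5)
Your proof is correct, and its skeleton matches the paper's. You use the same candidate (the conjunction of the $\mathbf{in}$-labelled attackers of one attacker of $\alpha$) and the same modification (all of them set to $\mathbf{out}$). You use the same finite descent to a minimal sub-conjunction. For item 2 you use the same overdetermination framework $\mathscr{F}_4$, up to renaming.

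The genuine difference is your witness $\chi$, which clamps every argument outside $G\cup\{\alpha,\beta\}$ to its actual label. The paper intervenes with $\bigwedge_{\eta\in\mathscr{C}}\mathbf{out}(\eta)$ alone and asserts that the attacker of $\alpha$ then has no $\mathbf{in}$ parent. This ignores that its other parents may flip to $\mathbf{in}$. For example, take attacks $\gamma\to\delta$, $\gamma\to\beta$, $\delta\to\beta$, $\beta\to\alpha$, so $G=\{\gamma\}$. Intervening with $\mathbf{out}(\gamma)$ alone makes $\delta$ $\mathbf{in}$, so $\beta$ stays $\mathbf{out}$ and $\alpha$ stays $\mathbf{in}$. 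Your $\chi$, which here fixes $\delta$ to $\mathbf{out}$, is exactly what makes this step valid. It also yields a unique solution, and hence $\mathrm{AC2}^m_\forall$.

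Two of your remarks also make explicit what the paper's descent step leaves implicit. First, $\mathsf{Mod}(\top)=\emptyset$ forces the minimal sub-conjunction to be nonempty. Second, $\mathrm{AC3}$ must be read as ``no proper sub-conjunction satisfies $\mathrm{AC1}\wedge\mathrm{AC2}^m$'', with each sub-conjunction allowed its own witness.

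One small slip: if $(\beta,\beta)\in\mathscr{R}$, then $\beta$ is an unclamped attacker of itself, so ``all attackers of $\beta$ are fixed'' is false. The conclusion still holds. $\beta$ cannot be $\mathbf{in}$, since it would attack itself. It cannot be $\mathbf{out}$ either: all its attackers would then be non-$\mathbf{in}$, so $f_\beta$ would return $\mathbf{in}$ or $\mathbf{und}$. Hence $\beta$ is forced to $\mathbf{und}$, $\alpha$ is again not $\mathbf{in}$, and the solution is still unique.
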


\begin{example}\label{ex:overdetermination}
    Consider the AF $\mathscr{F}_{4}$ in Fig.~\ref{fig:overdetermination} and its unique $co$-labelling $\mathscr{L}$ depicted therein. Let the topic be $\eta$ labelled $\mathbf{in}$. Although $\eta$ is attacked by $\gamma$, it is accepted because both $\alpha$ and $\beta$ defend it, each sufficient to protect $\eta$ from $\gamma$'s attack.
     
    \begin{center}
        \begin{minipage}[c]{0.4\textwidth}
            \centering
            \begin{tikzpicture}[->,>=latex,auto,
                every node/.style={draw=black,circle,minimum size=0.7cm,inner sep=0pt},
                outnode/.style={text=black, pattern=north east lines, pattern color=gray!50},
                innode/.style={fill=gray!130, text=white},
                undnode/.style={fill=gray!50, text=white},
                scale=0.9
            ]
                \node[outnode] (c) at (0,0) {$\gamma$};
                \node[innode] (a) at (-1.5,0.7) {$\alpha$};
                \node[innode] (b) at (-1.5,-0.7) {$\beta$};
                \node[innode] (d) at (1.5,0) {$\eta$};
                \draw (b)--(c);
                \draw (c)--(d);
                \draw (a)--(c);
                
            \end{tikzpicture}
            \captionof{figure}{$\mathscr{F}_4$: Overdetermination}
            \label{fig:overdetermination}
        \end{minipage}
        \hfill
        \begin{minipage}[c]{0.55\textwidth}
        \vspace{-10pt}
            Neither $\mathbf{in}(\alpha)$ nor $\mathbf{in}(\beta)$ passes the but-for test $\mathrm{BFT}_{\exists}$.
            By contrast, the compound hypothesis $\psi = \mathbf{in}(\alpha) \land \mathbf{in}(\beta)$ satisfies $\mathrm{AC2}^m_{\exists}$ and therefore belongs to $\operatorname{Counter}(\mathscr{F}_4,\eta)$, providing a non-trivial explanation with $\mathscr{A}_{\psi} \setminus \{\eta\} \neq \emptyset$.
        \end{minipage}
        \vspace{-3pt}
    \end{center}
      
\end{example}

\section{Graph Mutilations of Intervention}
\label{sec:IntandMod}

Following Pearl's graph mutilations for interventions in causal networks~\cite[p.~23]{pearl2009causality}, we define a corresponding graph operation on AFs for each atomic hypothesis $\mathbf{v}(\alpha)$.
These operations are illustrated in Fig.~\ref{fig:modification} and formalised in the following Def.~\ref{def:AtModi}.

Notably, this section provides a graphical counterpart of the intervention (Def.~\ref{def:Intervention}). The core contribution of the intervention and $\mathrm{AC2}^m$ lies in distinguishing causes from witnesses, a capability absent from earlier graph‑modification approaches and essential for moving beyond the but‑for test, as demonstrated in Section~\ref{sec:counterexpl}.

\begin{figure}[htbp]
    \centering
    \begin{subfigure}[b]{0.32\textwidth}
        \centering
        \begin{tikzpicture}[scale=0.95,->,>=latex,auto,transform shape,
            every node/.style={draw=black,circle,minimum size=0.65cm,inner sep=0pt},
            ]
            \node (c) at (0,0) {$\gamma$};
            \node (a) at (-1.4,0.5) {$\alpha$};
            \node (b) at (-1.4,-0.5) {$\beta$};
            \node (d) at (1.4,0) {$\eta$};
                 
            \draw (b)--(c);
            \draw (c)--(d);
            \draw (a) to[bend left] (c);
            \draw[densely dashed] (c) to[bend left] (a);
        \end{tikzpicture}
        \caption{$\mathscr{F}_{2,\mathbf{in}(\alpha)}$}
        \label{fig:mod-a}
    \end{subfigure}
    \hfill
    \begin{subfigure}[b]{0.32\textwidth}
        \centering
        \begin{tikzpicture}[scale=0.95,->,>=latex,auto,transform shape,
            every node/.style={draw=black,circle,minimum size=0.65cm,inner sep=0pt},
            ]
            \node (c) at (0,0) {$\gamma$};
            \node (a) at (-1.4,0.5) {$\alpha$};
            \node (b) at (-1.4,-0.5) {$\beta$};
            \node (d) at (1.4,0) {$\eta$};
                 
            \draw (b)--(c);
            \draw (c)--(d);
            \draw (a) to[bend left] (c);
            \draw[densely dashed] (c) to[bend left] (a);
            \draw[double, double distance=0.3pt] (a) to[loop above, min distance=0.55cm] (a);
        \end{tikzpicture}
        \caption{$\mathscr{F}_{2,\mathbf{und}(\alpha)}$}
        \label{fig:mod-b}
    \end{subfigure}
    \hfill
    \begin{subfigure}[b]{0.32\textwidth}
        \centering
        \begin{tikzpicture}[scale=0.95,->,>=latex,auto,transform shape,
            every node/.style={draw=black,circle,minimum size=0.65cm,inner sep=0pt},
            ]
            \node (c) at (0,0) {$\gamma$};
            \node (a) at (-1.4,0.5) {$\alpha$};
            \node (b) at (-1.4,-0.5) {$\beta$};
            \node (d) at (1.4,0) {$\eta$};
            \node[double, double distance=0.3pt] (e) at (0,0.9) {$\mathscr{K}_{\alpha}$};
                    
            \draw (b)--(c);
            \draw (c)--(d);
            \draw (a) to[bend left] (c);
            \draw (c) to[bend left] (a);
            \draw[double, double distance=0.3pt] (e)--(a);
        \end{tikzpicture}
        \caption{$\mathscr{F}_{2,\mathbf{out}(\alpha)}$}
        \label{fig:mod-c}
    \end{subfigure}
    \vspace{-8pt} 
    \caption{Atomic Mutilation of $\mathscr{F}_{2}$ in Fig.~\ref{fig:runningexample}. Symbols: dashed arrows $=$ removed attacks, double arrows $=$ newly introduced attacks, and double circles $=$ newly introduced arguments.}
    \label{fig:modification}
    \vspace{-3pt}
\end{figure}

\begin{definition}[Atomic Mutilation]\label{def:AtModi}
Given an AF $\mathscr{F}=(\mathscr{A},\mathscr{R})$ and an atomic hypothesis $\mathbf{v}(\alpha)$ of $\mathscr{F}$, three types of atomic mutilated $\mathscr{F}$ are defined as follows:
\begin{itemize}
    \item $\mathscr{F}_{\mathbf{out}(\alpha)} = \bigl( \mathscr{A} \cup \{ \mathscr{K}_\alpha \},\; \mathscr{R} \cup \{ (\mathscr{K}_\alpha,\alpha) \} \bigr)$,
    \item $\mathscr{F}_{\mathbf{in}(\alpha)}  = \bigl( \mathscr{A},\; \mathscr{R} \setminus \{ (\beta,\alpha) \mid \beta \in \alpha^{-} \} \bigr)$,
    \item $\mathscr{F}_{\mathbf{und}(\alpha)} = \bigl( \mathscr{A},\; \mathscr{R} \cup \{ (\alpha,\alpha) \} \setminus \{ (\beta,\alpha) \mid \beta \in \alpha^{-} \} \bigr)$.
\end{itemize}
\end{definition}

These atomic mutilations are not unique: $\mathscr{F}_{\mathbf{out}(\alpha)}$ and $\mathscr{F}_{\mathbf{in}(\alpha)}$ match Sakama~\cite{SakamaCounterfactualReasoning2014}, while $\mathscr{F}_{\mathbf{und}(\alpha)}$ follows Rienstra~\cite{rienstra2014phdthesisarinflux}. Unlike their metaphysical interpretations, our intervention is purely instrumental for probing dependencies, treating the graph representation merely as a technical convenience. Any admissible formulation must respect two properties:

\vspace{-5pt}
\begin{description}
    \item[Node Independence]:  In $\mathscr{M}^{\mathbf{v}(\alpha)}_{\mathscr{F}}$, the label of $\alpha$ no longer depends on its attackers. This is realized in the graph mutilation as follows: for $\mathscr{F}_{\mathbf{out}(\alpha)}$, adding $\mathscr{K}_\alpha$ overrides all incoming attacks; for $\mathscr{F}_{\mathbf{in}(\alpha)}$, incoming attacks are removed; for $\mathscr{F}_{\mathbf{und}(\alpha)}$, incoming attacks are removed and a self‑attack is added.
    \item[Status Uniqueness]: The intervention fixes the label of $\alpha$ to $\mathbf{v}$ in $\mathscr{M}^{\mathbf{v}(\alpha)}_{\mathscr{F}}$. The same holds in the mutilated AF.
\end{description}
\vspace{-5pt}

The two properties discussed above jointly guarantee that the application of intervention hypotheses is order-invariant.

\begin{proposition}\label{pro:order0invariant}
Let $\mathscr{M}_\mathscr{F}$ be an acceptability model and $\psi=\psi_1 \land \psi_2$ a compound hypothesis. Then, $\mathscr{M}^{\psi}_\mathscr{F} \;=\; \big(\mathscr{M}_{\mathscr{F}}^{\psi_1}\big)^{\psi_2} \;=\; \big(\mathscr{M}_{\mathscr{F}}^{\psi_2}\big)^{\psi_1}$.

\end{proposition}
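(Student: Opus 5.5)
The approach is to check the equality equation-by-equation, comparing the two sides as families of functions $\{f_\beta\}_{\beta\in\mathscr{A}}$. Since $\psi=\psi_1\wedge\psi_2$ is a hypothesis, the constraint in Def.~\ref{def:syntax} forces $\mathscr{A}_{\psi_1}\cap\mathscr{A}_{\psi_2}=\emptyset$ and $\mathscr{A}_{\psi}=\mathscr{A}_{\psi_1}\cup\mathscr{A}_{\psi_2}$. Each argument in $\mathscr{A}_\psi$ receives exactly one stipulated label, namely the one it carries in whichever conjunct it occurs in.

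First I would make sense of iterated intervention. Def.~\ref{def:Intervention} is stated for $\mathscr{M}_{\mathscr{F}}$, but it applies verbatim to any family of equations over the same domains. For a model $\mathscr{M}=\{g_\beta\}$, set $g^{\psi}_\beta$ to be the constant $\mathbf{v_i}$ if $\beta=\alpha_i\in\mathscr{A}_\psi$, and $g_\beta$ otherwise. This reading is implicit in Def.~\ref{def:Model-Satisfaction}.

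Next I would fix an arbitrary $\beta\in\mathscr{A}$ and split into three exhaustive and disjoint cases.
\begin{itemize}
\item If $\beta\in\mathscr{A}_{\psi_1}$ with stipulated label $\mathbf{v}$, then $f^{\psi_1}_\beta$ is the constant $\mathbf{v}$. Since $\beta\notin\mathscr{A}_{\psi_2}$, applying $\psi_2$ leaves it unchanged, so $(f^{\psi_1})^{\psi_2}_\beta=\mathbf{v}$. In the other order, $f^{\psi_2}_\beta=f_\beta$, and then $\psi_1$ replaces it by $\mathbf{v}$. Directly, $f^{\psi}_\beta=\mathbf{v}$ as well.
\item If $\beta\in\mathscr{A}_{\psi_2}$, the argument is symmetric.
\item If $\beta\notin\mathscr{A}_\psi$, all three functions equal $f_\beta$.
\end{itemize}
The functions agree pointwise on the common domain, which is the labels of $\beta^-$ in the original AF. Hence the three models coincide.

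The only delicate point is the disjointness step. If $\psi_1$ and $\psi_2$ could assign different labels to the same argument, the order would matter, since the last intervention would win. The syntactic consistency constraint rules this out, and the case split depends on it. I would also note the degenerate cases $\psi_i=\top$, where the intervention is the identity and the claim is trivial. Finally, I would remark that the Node Independence and Status Uniqueness properties are exactly what the first two cases use: once intervened, an equation is constant and does not depend on attackers.
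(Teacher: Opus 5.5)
Your proof is correct and follows the paper's argument essentially step for step. The paper also derives from the syntactic constraint of Def.~\ref{def:syntax} that no argument occurs in both $\psi_1$ and $\psi_2$ (its ``Sub-conclusion 1''), then compares $f^{\psi}_{\alpha}$, $(f^{\psi_1}_{\alpha})^{\psi_2}$ and $(f^{\psi_2}_{\alpha})^{\psi_1}$ with the same three-way case split (argument not in $\psi$, in $\psi_1$, in $\psi_2$); your reading of Def.~\ref{def:Intervention} as applying to an arbitrary family of equations and your handling of $\top$ conjuncts only make explicit what the paper leaves implicit.
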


This result also holds for framework mutilation. Based on this, we define mutilation of an AF by a compound hypothesis through the recursion of atomic mutilations.

\begin{definition}[Compound Mutilation]
Let $\mathscr{F}$ be an AF and $\psi\in L_{\mathscr{UA}}^{-}$ a hypothesis. The compound mutilated AF $\mathscr{F}_\psi$ is defined recursively w.r.t the possible forms of $\psi$:
\begin{itemize}    
\item If $\psi=\mathbf{v}(\alpha)$, then $\mathscr{F}_\psi = \mathscr{F}_{\mathbf{v}(\alpha)}$,
\item If $\psi=\psi_1\land \psi_2$, then $\mathscr{F}_\psi = (\mathscr{F}_{\psi_1})_{\psi_2}=(\mathscr{F}_{\psi_2})_{\psi_1}$.
\end{itemize}
\end{definition}

We end this section by establishing a mapping from solutions of the intervened models to labellings of the mutilated AFs. 

\begin{theorem}\label{the:intervention-modification}
    Let $\mathscr{F}=(\mathscr{A},\mathscr{R})$ be an AF, $\mathscr{M}_\mathscr{F}$ its acceptability model, and $\psi\in L_{\mathscr{UA}}^{-}$ a hypothesis. For every $\mathscr{L}\in \mathsf{Sol}(\mathscr{M}_{\mathscr{F}}^{\psi})$, there exists a labelling $\mathscr{L}'$ of the mutilated AF $\mathscr{F}_{\psi}$ s.t., for any argument $\alpha \in \mathscr{A}$, it holds that $\mathscr{L}(\alpha)=\mathscr{L}'(\alpha)$. 
\end{theorem}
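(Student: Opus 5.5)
The plan is to strengthen the statement slightly and prove that the witnessing $\mathscr{L}'$ can be taken to be a \emph{complete} labelling of $\mathscr{F}_{\psi}$. Otherwise the claim is trivial, since any total extension of $\mathscr{L}$ would do. By Theorem~\ref{the:SolandL}, applied to the AF $\mathscr{F}_{\psi}$, it is enough to show that $\mathscr{L}'$ is a solution of the non-intervened acceptability model $\mathscr{M}_{\mathscr{F}_{\psi}}$.

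\textbf{Step 1: the explicit form of $\mathscr{F}_{\psi}$.} Write $\psi=\bigwedge_{i=1}^k \mathbf{v_i}(\alpha_i)$, with the $\alpha_i$ pairwise distinct by the syntactic constraint. By induction on the structure of $\psi$, and using Proposition~\ref{pro:order0invariant} together with its framework analogue stated before the Compound Mutilation definition, I will show the following. $\mathscr{F}_{\psi}$ has argument set $\mathscr{A}\cup\{\mathscr{K}_{\alpha_i}\mid \mathbf{v_i}=\mathbf{out}\}$, and each argument's attackers are:
\begin{itemize}
\item for $\beta\notin\mathscr{A}_{\psi}$, the same attackers as in $\mathscr{F}$;
\item for $\alpha_i$ with $\mathbf{v_i}=\mathbf{in}$, no attackers;
\item for $\alpha_i$ with $\mathbf{v_i}=\mathbf{und}$, only $\alpha_i$ itself;
\item for $\alpha_i$ with $\mathbf{v_i}=\mathbf{out}$, the original attackers together with $\mathscr{K}_{\alpha_i}$;
\item for each $\mathscr{K}_{\alpha_i}$, no attackers.
\end{itemize}
The key observation is that every atomic mutilation for $\mathbf{v}(\alpha)$ touches only the incoming attacks of $\alpha$, apart from adding a fresh $\mathscr{K}_\alpha$ that attacks only $\alpha$. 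Since the $\alpha_i$ are distinct, the mutilations do not interfere with one another. I expect this bookkeeping to be the main obstacle. The care needed is in checking that removing the incoming attacks of one $\alpha_i$ never deletes an edge required by another clause. This holds because an edge $(\beta,\alpha_i)$ is an incoming attack only of $\alpha_i$. It must also be checked that a later mutilation of $\alpha_j$ cannot remove the fresh attack $(\mathscr{K}_{\alpha_i},\alpha_i)$, which holds since $j\neq i$.

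\textbf{Step 2: define $\mathscr{L}'$.} Given $\mathscr{L}\in\mathsf{Sol}(\mathscr{M}_{\mathscr{F}}^{\psi})$, set $\mathscr{L}'(\beta)=\mathscr{L}(\beta)$ for $\beta\in\mathscr{A}$, and $\mathscr{L}'(\mathscr{K}_{\alpha_i})=\mathbf{in}$ for each added argument. Agreement on $\mathscr{A}$ then holds by construction.

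\textbf{Step 3: check every equation of $\mathscr{M}_{\mathscr{F}_{\psi}}$ against $\mathscr{L}'$.} I will use the explicit form from Step 1 and go case by case.
\begin{itemize}
\item For $\beta\notin\mathscr{A}_{\psi}$, the attacker set is unchanged and $f^{\psi}_\beta=f_\beta$. So the equation is exactly the one $\mathscr{L}$ already satisfies.
\item For $\mathscr{K}_{\alpha_i}$, the argument is unattacked, so $f$ returns $\mathbf{in}$, which matches.
\item For $\mathbf{in}(\alpha_i)$, there are no attackers, so $f$ returns $\mathbf{in}$ vacuously. This equals $\mathscr{L}(\alpha_i)$, since $f^{\psi}_{\alpha_i}$ is the constant $\mathbf{in}$.
\item For $\mathbf{out}(\alpha_i)$, the attacker $\mathscr{K}_{\alpha_i}$ is labelled $\mathbf{in}$, so $f$ returns $\mathbf{out}$, which matches.
\item For $\mathbf{und}(\alpha_i)$, the sole attacker is $\alpha_i$ itself, labelled $\mathbf{und}$. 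It is neither all-$\mathbf{out}$ nor has an $\mathbf{in}$ attacker, so $f$ returns $\mathbf{und}$, which matches.
\end{itemize}
Hence $\mathscr{L}'\in\mathsf{Sol}(\mathscr{M}_{\mathscr{F}_{\psi}})=\mathscr{L}_{co}(\mathscr{F}_{\psi})$ by Theorem~\ref{the:SolandL}, which proves the theorem. These five cases are exactly the Node Independence and Status Uniqueness properties, now made concrete.
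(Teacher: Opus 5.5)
Your proof is correct and follows essentially the same route as the paper: extend $\mathscr{L}$ by labelling each fresh $\mathscr{K}_{\alpha_i}$ $\mathbf{in}$, then verify the complete-labelling conditions argument by argument, splitting into non-intervened arguments and intervened ones according to their stipulated label. Your departures are minor. You spell out the attacker sets of $\mathscr{F}_{\psi}$ in Step~1, which the paper uses only tacitly (for instance, when it assumes non-intervened arguments keep their original attackers). You also check the equations of $\mathscr{M}_{\mathscr{F}_{\psi}}$ and invoke Theorem~\ref{the:SolandL}, where the paper checks conditions (i)--(ii) directly.
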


\section{Discussion}
\label{sec:relatedwork}

Most explanation methods in the literature are \emph{argument-based}, defined as a function $\operatorname{A-Expl}: \mathscr{UF} \times \mathscr{UA} \to 2^{2^{\mathscr{A}}}$ that returns sets of arguments as explanations. We recall four prominent instantiations $\operatorname{Root}_{\sigma}$, $\operatorname{Suf}$, $\operatorname{Nec}$, and $\operatorname{Close-Count}_{\sigma}$ below.

\paragraph{\textbf{Root Reason.}}
Liao and Van Der Torre~\cite[Def.~14]{liaoAttackDefenseSemantics2024} introduce \emph{root reason}. 
Given an AF $\mathscr{F}=(\mathscr{A},\mathscr{R})$, an extension $E\in \mathscr{E}_{\sigma}(\mathscr{F})$, and a topic $\alpha\in E$, let $\overline{E}=\{(\gamma,\delta)\mid \gamma,\delta\in E,\gamma\text{ directly defends }\delta\}$ with transitive closure $\overline{E}^{+}$. 
An argument $\beta\in E$ is \emph{initial} if it has no attackers in $\mathscr{F}$, and \emph{self-explanatory} if $(\beta,\beta)\in\overline{E}^{+}$ (denoted $E_I$ and $E_S$). 
Then, 
\[\operatorname{Root}_{\sigma}(\mathscr{F},\alpha)=\bigl\{\, \{\beta\in E\mid (\beta,\alpha)\in \overline{E}^{+},\ \beta\in E_I\cup E_S\} \bigm| E\in \mathscr{E}_{\sigma}(\mathscr{F}),\ \alpha\in E \,\bigr\}.\]

\paragraph{\textbf{Sufficient and Necessary Explanation.}}
Borg and Bex~\cite[Def.~28]{borgMNS2024} formalise \emph{sufficient and necessary} explanations in AA using the notion of relevance (Def.~\ref{def:path-relevant}).
A set $S\subseteq \mathscr{A}$ is \emph{sufficient} for $\alpha$ iff $S$ is relevant for $\alpha$, conflict-free, and defends $S\cup\{\alpha\}$. 
An argument $\beta\in \mathscr{A}$ is \emph{necessary} for $\alpha$ iff $\beta$ is relevant for $\alpha$ and, for every admissible set $E\in \mathscr{E}_{ad}(\mathscr{F})$, $\beta\notin E$ implies $\alpha\notin E$. Then, 
\begin{align*}
\operatorname{Suf}(\mathscr{F},\alpha) &= \bigl\{\, S\cup\{\alpha\} \bigm| S\text{ is sufficient for }\alpha \,\bigr\},\\
\operatorname{Nec}(\mathscr{F},\alpha) &= \bigl\{\, \{\alpha\}\cup\{\beta\in\mathscr{A}\mid \beta\text{ is necessary for }\alpha\} \,\bigr\}.
\end{align*}


The next proposition relates counterfactual explanation to root reasons and sufficient explanation.
When causes are restricted to initial or self‑explanatory arguments, counterfactual explanation is more precise than root reasons.
When restricted to accepted arguments, it is more selective than the sufficient explanation.

\begin{proposition}\label{pro:counter-root}
    Let $\mathscr{F}=(\mathscr{A},\mathscr{R})$ be an AF, $\alpha\in \mathscr{A}$ the topic, and $\mathscr{L}\in \mathscr{L}_{co}(\mathscr{F})$ a $co$-labelling with $\mathscr{L}(\alpha)=\mathbf{in}$. Then,
    \begin{enumerate}
        \item For all $\psi\in \operatorname{Counter}(\mathscr{F},\alpha)$ with $\mathscr{A}_{\psi}\subseteq \mathbf{in}(\mathscr{L})_{I}\cup \mathbf{in}(\mathscr{L})_{S}$ and $(\alpha_i,\alpha)\in \overline{\mathbf{in}(\mathscr{L})}^{+}$ for all $\alpha_i\in \mathscr{A}_{\psi}$, there exists $E\in \operatorname{Root}_{co}(\mathscr{F},\alpha)$ s.t. $\mathscr{A}_{\psi}\subseteq E$. The converse does not hold.
        
        \item For all $\psi\in\operatorname{Counter}(\mathscr{F},\alpha)$ with $\mathscr{A}_{\psi}\subseteq\mathbf{in}(\mathscr{L})$, there exists $E\in\operatorname{Suf}(\mathscr{F},\alpha)$ s.t. $\mathscr{A}_{\psi}\subseteq E$. The converse does not hold: there exist $E\in\operatorname{Suf}(\mathscr{F},\alpha)$ for which no such $\psi$ satisfies $\mathscr{A}_{\psi}\subseteq E\setminus\{\alpha\}$.
    \end{enumerate}



\end{proposition}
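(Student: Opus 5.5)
Both parts share a common backbone: the $\mathrm{AC2}^m_{\exists}$ condition is converted into a statement about the grounded/complete structure of $\mathscr{F}$. Fix $\psi\in\operatorname{Counter}(\mathscr{F},\alpha)$, and hence a solution $\mathscr{L}^{*}$ with $\mathscr{L}^{*}(\alpha)=\mathbf{in}$ witnessing membership. By $\mathrm{AC1}$, every $\alpha_i\in\mathscr{A}_{\psi}$ carries its actual label in $\mathscr{L}^{*}$. By Proposition~\ref{pro:relevant}, each $\alpha_i$ is relevant to $\alpha$. In both parts we assume the hypotheses place $\mathscr{A}_{\psi}$ inside $\mathbf{in}(\mathscr{L})$, and we take $\mathscr{L}$ to be that labelling. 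The converse claims will be handled by explicit counterexamples, reusing the figures from earlier in the paper.

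\textbf{Part 1.} Let $E=\mathbf{in}(\mathscr{L})$. By Theorem~\ref{the:SolandL}, $E$ is a complete extension containing $\alpha$. The root reason associated with $E$ is
\[R_E=\{\beta\in E\mid (\beta,\alpha)\in\overline{E}^{+},\ \beta\in E_I\cup E_S\}.\]
By hypothesis, every $\alpha_i\in\mathscr{A}_{\psi}$ lies in $E_I\cup E_S$ and satisfies $(\alpha_i,\alpha)\in\overline{E}^{+}$. Hence $\mathscr{A}_{\psi}\subseteq R_E$ directly from the definition, and $R_E\in\operatorname{Root}_{co}(\mathscr{F},\alpha)$. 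The only point needing care is that $\overline{\mathbf{in}(\mathscr{L})}$ is exactly the $\overline{E}$ of Liao and Van Der Torre; this holds because the labelling-to-extension correspondence is a bijection.

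For the failure of the converse, we use the Preemption framework $\mathscr{F}_1$. Here $\beta$ is initial, but $\beta$ has no $\overline{E}^{+}$ path to $\eta$, since $\omega$ and $\epsilon$ are not in $E$. A cleaner choice is $\mathscr{F}_4$ (Overdetermination). There $\operatorname{Root}_{co}(\mathscr{F}_4,\eta)=\{\{\alpha,\beta\}\}$, while $\mathbf{in}(\alpha)$ alone is not in $\operatorname{Counter}$. So we can exhibit a root reason, or a subset of it, that is not realized as $\mathscr{A}_{\psi}$ of any atomic counterfactual explanation. We would phrase ``the converse'' as: not every root reason equals some $\mathscr{A}_{\psi}$. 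In $\mathscr{F}_4$, minimality, together with the fact that $\{\alpha,\beta\}$ is the compound cause, should suffice after checking that no other counterexample interferes. If this does not work, we fall back on a framework with a self-explanatory even cycle that defends $\alpha$ redundantly.

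\textbf{Part 2.} We take $S=\mathbf{in}(\mathscr{L})\cap\{\beta\mid \beta\text{ relevant to }\alpha\}$, minus $\alpha$, and aim to show that $S$ is sufficient for $\alpha$. Conflict-freeness is inherited from $\mathbf{in}(\mathscr{L})$. Relevance holds by construction.

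The main obstacle is defense. Take any $\delta\in S\cup\{\alpha\}$ and an attacker $\gamma$ of $\delta$. Because $\mathscr{L}$ is complete, $\gamma$ is $\mathbf{out}$, so some $\mathbf{in}$ argument $\epsilon$ attacks $\gamma$. Then $\epsilon\to\gamma\to\delta$, and $\delta$ is relevant to $\alpha$, so $\epsilon$ is also relevant to $\alpha$. Therefore $\epsilon\in S$ unless $\epsilon=\alpha$, and if $\epsilon=\alpha$ the defense by $S\cup\{\alpha\}$ still works. To meet the requirement that $S$ alone defends, we handle this case via $S\cup\{\alpha\}$: we include $\alpha$ in $S$ when needed. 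This is permitted because $\operatorname{Suf}$ returns $S\cup\{\alpha\}$, and $\alpha$ is relevant to itself.

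Since $\mathscr{A}_{\psi}\subseteq\mathbf{in}(\mathscr{L})$ and each element is relevant to $\alpha$, we get $\mathscr{A}_{\psi}\subseteq S\cup\{\alpha\}$.

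For the converse, we again use $\mathscr{F}_4$ with $E=\{\alpha,\eta\}$, which is in $\operatorname{Suf}$. The compound cause $\mathbf{in}(\alpha)\wedge\mathbf{in}(\beta)$ is not contained in $\{\alpha\}$. Moreover, we must verify that no $\psi$ with $\mathscr{A}_{\psi}\subseteq\{\alpha\}$ lies in $\operatorname{Counter}$: the atomic $\mathbf{in}(\alpha)$ fails $\mathrm{AC2}^m_{\exists}$ for every witness choice. That final witness-exhaustive check is routine but must range over all $\chi$, including ones that fix $\beta$. Fixing $\beta$ as $\mathbf{in}$ does not help, and $\chi$ may only use actual labels. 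So the check reduces to confirming that $\eta$ stays $\mathbf{in}$ whenever $\beta$ is $\mathbf{in}$.
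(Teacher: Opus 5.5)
Your forward direction of Part~1 and all of Part~2 coincide with the paper's proof. You use the same $R_{E_0}$ read off from $E_0=\mathbf{in}(\mathscr{L})$, and the same relevance-closed set $S$, with $\alpha$ in it as you end up allowing. You prove defence the same way, by prepending an $\mathbf{in}$ counter-attacker to a path into $\alpha$. For the converse of Part~2 you use the same $\mathscr{F}_4$ counterexample with $E=\{\alpha,\eta\}$, and your witness-exhaustive check on $\mathbf{in}(\alpha)$ is more careful than the paper's bare list of candidates.

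The gap is the counterexample for the converse of Part~1. $\mathscr{F}_4$ does not refute it under either reading. The paper's reading is that every root reason $E$ satisfies $E\subseteq\mathscr{A}_\psi$ for some $\psi\in\operatorname{Counter}$; yours is that every root reason equals some $\mathscr{A}_\psi$. In $\mathscr{F}_4$ the unique root reason $\{\alpha,\beta\}$ is exactly $\mathscr{A}_\psi$ for $\psi=\mathbf{in}(\alpha)\wedge\mathbf{in}(\beta)$, which lies in $\operatorname{Counter}(\mathscr{F}_4,\eta)$ (Example~\ref{ex:overdetermination}). Minimality cannot rule this compound out. It is an actual cause precisely because neither conjunct passes $\mathrm{AC2}^m_{\exists}$ on its own, so your remark that minimality ``should suffice'' points the wrong way. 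Showing that $\{\alpha\}$ supports no atomic cause only refutes a weakened statement, since the converse ranges over compound $\psi$ as well.

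What is missing is the opposite of overdetermination. You need a framework in which some proper part of a root reason is already an actual cause, so that $\mathrm{AC3}$ forbids any actual cause from covering the whole root reason. The paper uses the V-shape $\gamma_1\to\beta_1\to\alpha\leftarrow\beta_2\leftarrow\gamma_2$. Its only root reason is $\{\gamma_1,\gamma_2\}$, but each $\mathbf{in}(\gamma_i)$ is a but-for cause: intervening with $\mathbf{out}(\gamma_i)$ makes $\beta_i$ $\mathbf{in}$ and $\alpha$ $\mathbf{out}$. By $\mathrm{AC1}$, any $\psi$ mentioning both $\gamma_i$ contains $\mathbf{in}(\gamma_1)$ as a proper sub-conjunction, so it violates minimality and no $\psi\in\operatorname{Counter}$ has $\{\gamma_1,\gamma_2\}\subseteq\mathscr{A}_\psi$.

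Your fallback would also work if instantiated as $\mathscr{F}_2$ (Example~\ref{ex:selective}). Its root reason is $\{\alpha,\beta\}$, yet $\mathbf{in}(\beta)$ alone passes $\mathrm{AC2}^m_{\exists}$ through the undecided solution of $\mathscr{M}_{\mathscr{F}_2}^{\mathbf{out}(\beta)}$. Any cause mentioning both $\alpha$ and $\beta$ must therefore contain $\mathbf{in}(\beta)$ and fails minimality. What makes this work is minimality, not the redundancy of the defence. Redundancy by itself, as in $\mathscr{F}_4$, produces exactly the compound cause you need to avoid.
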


\begin{example}[Cont.~of Ex.~\ref{ex:runningexample}]\label{ex:selective}
    Recall the AF $\mathscr{F}_{2}$ in Fig.~\ref{fig:runningexample} and its unique $co$-labelling $\mathscr{L}$ depicted therein, with $\eta$ the topic argument. $\operatorname{Root}_{co}(\mathscr{F}_{2},\eta)=\{\{\alpha,\beta\}\}$, since $\alpha\in \mathbf{in}(\mathscr{L})_{S}$ and $\beta\in \mathbf{in}(\mathscr{L})_{I}$, and both directly defend $\eta$. $\operatorname{Suf}(\mathscr{F}_2,\eta)=\{\{\alpha,\beta,\eta\},\{\alpha,\eta\},\{\beta,\eta\}\}$, each $S\setminus \{\eta\}$ for $S\in \operatorname{Suf}(\mathscr{F}_2,\eta)$ being conflict-free, relevant, and defending $S\cup \{\eta\}$. 

    For counterfactual explanation, $\mathbf{in}(\beta)$ is an actual cause, whereas $\mathbf{in}(\alpha)$ fails $\mathrm{AC2}^m_{\exists}$.
    First, after intervening with $\mathbf{out}(\alpha)$, $\mathscr{M}_{\mathscr{F}_2}^{\mathbf{out}(\alpha)}$ has the unique solution $\mathscr{L}_1=\langle \{\beta,\eta\},\{\alpha,\gamma\},\emptyset \rangle$, where $\eta$ remains $\mathbf{in}$.
    Second, after intervening with $\mathbf{und}(\alpha)$, $\mathscr{M}_{\mathscr{F}_2}^{\mathbf{und}(\alpha)}$ has the unique solution $\mathscr{L}_2=\langle \{\beta,\eta\},\{\gamma\},\{\alpha\}\rangle$, with $\eta$ still $\mathbf{in}$.
    Third, no admissible witness can rescue $\mathbf{in}(\alpha)$: in both solutions, $\gamma$ retains its original label $\mathbf{out}$, so fixing any other arguments as witnesses does not alter the outcome.
    In contrast, $\mathscr{M}_{\mathscr{F}_2}^{\mathbf{out}(\beta)}$ admits a solution $\mathscr{L}_3=\langle \emptyset, \{\beta\}, \{\alpha,\gamma,\eta\} \rangle$, where $\eta$ is $\mathbf{und}$. Hence $\mathbf{in}(\beta)$ qualifies as an actual cause.
    Furthermore, Minimality excludes any $\psi^{\prime}$ with $\mathscr{A}_{\psi^{\prime}}\supsetneq\mathscr{A}_{\mathbf{in}(\beta)}$; hence, $\psi^{\prime}\notin\operatorname{Counter}(\mathscr{F}_2,\eta)$.

    Thus, for $\psi=\mathbf{in}(\beta)\in \operatorname{Counter}(\mathscr{F}_2,\eta)$, $\mathscr{A}_{\psi}=\beta$ is a subset of every $E\in \operatorname{Root}_{co}(\mathscr{F}_{2},\eta)$ and every $S\in \operatorname{Suf}(\mathscr{F}_2,\eta)$, yet the converse fails.
\end{example}

\begin{figure}[htbp]
        \centering
        \begin{tikzpicture}[scale=0.88,->,>=latex,
        auto,
        every node/.style={draw=black,circle,minimum size=0.7cm,inner sep=0pt},
    outnode/.style={text=black, pattern=north east lines, pattern color=gray!50},
    innode/.style={fill=gray!130, text=white},
    undnode/.style={fill=gray!50, text=white}
        ]
            \node[innode] (a1) at (-3.8,0) {$\alpha_{1}$};
            \node[outnode] (a2) at (-2.8,0.9) {$\alpha_{2}$};
            \node[innode] (a3) at (-1.8,0) {$\alpha_{3}$};
            \node[outnode] (a4) at (-2.8,-0.9) {$\alpha_{4}$};

            \draw (a1) -- (a2);
            \draw (a2) -- (a3);
            \draw (a3) -- (a4);
            \draw (a4) -- (a1);

            \node[outnode] (b) at (0,0) {$\beta$};

            \node[innode] (y1) at (1.5,0.9) {$\gamma_1$};
            \node[outnode] (y2) at (3,0.9) {$\gamma_2$};

            \node[innode] (c) at (2.25,-0.5) {$\eta$};
            \node[innode] (e) at (3.75,-0.5) {$\epsilon$};

            \draw (a3) -- (b);
            \draw (b) -- (y1);
            \draw (y1) to[bend left] (y2);
            \draw (y2) to[bend left] (y1);
            \draw (c) -- (y2);
            \draw (e) -- (y2);

        \end{tikzpicture}
        \caption{For Ex.~\ref{ex:rewithSuffandRoot} ($\mathscr{F}_{5}$)}
        \label{fig:rewithSuffandRoot}
\end{figure}

\begin{example}\label{ex:rewithSuffandRoot}
    Consider the AF $\mathscr{F}_5$ and the $co$-labelling $\mathscr{L}$ in Fig.~\ref{fig:rewithSuffandRoot}, with $\gamma_1$ as the topic argument.
    The set $E=\{\alpha_1,\alpha_3,\gamma_1\}$ is a sufficient explanation. And the set of $\mathbf{in}$ labelled arguments $E^{\prime}=\mathbf{in}(\mathscr{L})$ is a root reason. $\psi_1=\mathbf{in}(\alpha_1)$, $\psi_2=\mathbf{in}(\alpha_3)$, and $\psi_3=\mathbf{in}(\eta)\wedge \mathbf{in}(\epsilon)$ are all actual causes.
    Moreover, $\mathscr{A}_{\psi_i}$ is the subset of both $E$ and $E^{\prime}$ for $i\in \{1,2,3\}$.
\end{example}

\paragraph{\textbf{Closest-World-Based Counterfactual Explanation.}} In contrast to our intervention-based framework, Alfano et al.~\cite[Def. 1]{alfanoCounterfactual2024} introduce an alternative counterfactual explanation rooted in the \emph{closest-world semantics}~\cite{lewis1973counterfactuals}.
In this approach, a labelling $\mathscr{L}$ of an AF $\mathscr{F}=(\mathscr{A},\mathscr{R})$ is seen as the \emph{actual world} and other $\sigma$-labellings as \emph{possible worlds}. The similarity between two worlds is measured by the Hamming distance $\delta(\mathscr{L},\mathscr{L}') = |\{\alpha\in \mathscr{A} \mid \mathscr{L}(\alpha) \neq \mathscr{L}'(\alpha)\}|$.
A labelling $\mathscr{L}^{\prime}$ is a counterfactual of $\mathscr{L}$ w.r.t. $\alpha$ iff: 
\begin{enumerate}
    \item $\mathscr{L}(\alpha)\neq \mathscr{L}^{\prime}(\alpha)$, and
    \item there exists no $\mathscr{L}^{\prime\prime}\in \mathscr{E}_{\sigma}(\mathscr{F})$ s.t. $\mathscr{L}(\alpha)\neq \mathscr{L}^{\prime\prime}(\alpha)$ and $\delta(\mathscr{L},\mathscr{L}^{\prime\prime})<\delta(\mathscr{L},\mathscr{L}^{\prime})$.
\end{enumerate}

The set of counterfactuals of $\mathscr{L}$ in an AF w.r.t. $\alpha$ is denoted as $\mathfrak{CF}^{\sigma}(\alpha,\mathscr{L})$, and each such counterfactual labelling serves as an explanation for the topic's acceptance. Then,
\[
\operatorname{Close-Count}_{\sigma}(\mathscr{F},\alpha) =\big\{\mathbf{in}(\mathscr{L}^{\prime})\mid
    \mathscr{L}\in \mathscr{E}_{\sigma}(\mathscr{F}), \alpha\in \mathbf{in}(\mathscr{L}),
    \text{and}~\exists \mathscr{L}^{\prime}~\text{s.t.}~\mathscr{L}^{\prime}\in \mathfrak{CF}^{\sigma}(\alpha,\mathscr{L})\big\}.
\]

Following the methodology of actual causality~\cite{beckersprincipled2018}, human intuitions about causes can be formalised as \emph{structural principles} on a given graph topology.
While an exhaustive principle-based analysis is beyond the scope of this discussion,
Fig.~\ref{fig:preemption} (Preemption) and Fig.~\ref{fig:overdetermination} (Overdetermination) instantiate the key structural patterns behind those principles.
The next example compares our method with $\operatorname{Root}_{co}$, $\operatorname{Suf}$, $\operatorname{Nec}$, and $\operatorname{Close-Count}_{co}$ on these two structures.

\begin{example}[$\operatorname{Root_{co}}$, $\operatorname{Suf}$, $\operatorname{Nec}$, $\operatorname{Close-Count}_{co}$ vs. $\operatorname{Counter}$]\label{ex:root-suf-nec}
    Consider again the AFs $\mathscr{F}_1$ (Preemption, Fig.~\ref{fig:preemption}) and $\mathscr{F}_{4}$ (Overdetermination, Fig.~\ref{fig:overdetermination}), each with the unique $co$-labelling depicted therein. Let the topic be $\eta$ in both AFs.
    \vspace{-5pt}
    \begin{description}
    \item[Preemption ($\mathscr{F}_1$):]
        $\operatorname{Root}_{co}(\mathscr{F}_1,\eta)=\{\{\alpha\}\}$,
        $\{\alpha,\beta,\eta\}\in\operatorname{Suf}(\mathscr{F}_1,\eta)$,
        $\operatorname{Nec}(\mathscr{F}_1,\eta)=\\\{\{\alpha,\eta\}\}$, $\operatorname{Close-Count}_{co}(\mathscr{F}_1,\eta)=\emptyset$,
        $\mathbf{in}(\alpha)\in \operatorname{Counter}(\mathscr{F}_1,\eta)$ while $\mathbf{in}(\beta)\notin \operatorname{Counter}(\mathscr{F}_1,\eta)$.
    
    \item[Overdetermination ($\mathscr{F}_4$):]
        $\operatorname{Root}_{co}(\mathscr{F}_4,\eta)=\{\{\alpha,\beta\}\}$,
        $\{\alpha,\eta\},\{\beta,\eta\}\in\operatorname{Suf}(\mathscr{F}_4,\eta)$,
        $\operatorname{Nec}(\mathscr{F}_4,\eta)=\{\{\eta\}\}$, $\operatorname{Close-Count}_{co}(\mathscr{F}_{4},\eta)=\emptyset$,
        $\mathbf{in}(\alpha)\wedge\mathbf{in}(\beta)\in\operatorname{Counter}(\mathscr{F}_4,\eta)$ while $\mathbf{in}(\alpha),\mathbf{in}(\beta)\notin\operatorname{Counter}(\mathscr{F}_4,\eta)$.
    \end{description}
    \vspace{-5pt}
Three observations follow.
\begin{enumerate}
    \item $\operatorname{Root}_{co}$ performs well in Preemption (singling out $\alpha$) but fails in Overdetermination, as it cannot distinguish between the two symmetric defenders $\alpha$ and $\beta$.
    \item $\operatorname{Nec}$ is closer to $\operatorname{Counter}$ than $\operatorname{Suf}$ is: in $\mathscr{F}_{1}$, $\operatorname{Nec}$ returns $\{\alpha,\eta\}$, whose non-topic element matches the actual cause $\mathbf{in}(\alpha)$; in $\mathscr{F}_4$, $\operatorname{Nec}$ returns $\{\eta\}$, reflecting the absence of any single necessary argument, while $\operatorname{Counter}$ captures joint necessity via $\mathbf{in}(\alpha)\wedge \mathbf{in}(\beta)$. This confirms that $\operatorname{Counter}$ merely tracks necessity, not sufficiency, which is a perspective largely overlooked in prior work.
    \item $\operatorname{Close-Count}_{co}$ fails to satisfy \emph{Existence}, which is a foundational principle for any explanation method. Because both $\mathscr{F}_1$ and $\mathscr{F}_4$ admit only a unique $co$-labelling, there are no alternative ``possible worlds (counterfactual labellings)'' where the status of $\eta$ changes. Consequently, $\operatorname{Close-Count}_{co}$ yields an empty set ($\emptyset$) for both frameworks, which directly contradicts the intuitive premise that the acceptance of $\eta$ has identifiable causes.
\end{enumerate}

\end{example}

\section{Conclusion and Future Work}
\label{sec:con}
We introduce an \emph{intervention-based} counterfactual reasoning framework in abstract argumentation and develop counterfactual explanations within this setting. By fixing \emph{witness} arguments, our method overcomes the limitations of the but-for approach. We further show that the intervention operation can be visualized through graph mutilation, thereby improving our method's explainability for human-aligned interaction.
A comparison with existing argument-based explanation methods shows that our counterfactual explanation is more selective while maintaining connections to these approaches.

Several research directions follow naturally. 
\emph{First}, our method and existing approaches warrant a thorough principle-based analysis. This includes different paradigms of counterfactual reasoning, such as the \emph{non-intervention-based} counterfactual explanation of Alfano et al.~\cite{alfanoCounterfactual2024} rooted in \emph{closest-world} semantics, and the \emph{post-hoc} explanation proposed by Amgoud~\cite{amgoud2024posthoc}.
We plan to develop \emph{structural principles} that describe specific graph patterns, thereby examining how these methods perform under those patterns.
\emph{Second}, the condition $\mathrm{AC2}^m$ is rooted in \emph{acyclic} structural equation models. Although it fits well, we will explore counterfactual conditions that are more compatible with \emph{cyclic} argumentation frameworks. The intervention-based counterfactual reasoning framework introduced in this paper offers an expressive logical language for this endeavor.
\emph{Third}, an analysis of the computational complexity and the development of algorithmic implementations for our framework remain crucial directions for practical deployment.
\emph{Fourth}, we plan to extend the approach to other semantics, and to extended frameworks (PAF, ADF)~\cite{alfanoExplainableAcceptanceProbabilistic2020,TjitzeADF2022}, as well as structured frameworks (ASPIC\textsuperscript{+}, ABA)~\cite{BengelCausal2022ArgXAI,Pisanocauseinfact2025,bochmancausationargumentation2025}.


\section*{Acknowledgements}
The authors are thankful to the anonymous reviewers for their helpful comments and suggestions. This work is supported by the National Natural Science Foundation of China (No. 62576309).

\bibliographystyle{vancouver}
\bibliography{refs}

\clearpage

\appendix
\begin{center}
    \large \textbf{Appendix: Proofs}
\end{center}

\section*{Proof of Theorem~\ref{the:SolandL}.}
\begin{proof}
\textbf{($\Rightarrow$) $\mathsf{Sol}(\mathscr{M}_{\mathscr{F}})\subseteq \mathscr{L}_{co}(\mathscr{F})$:}
Let $\mathscr{L}\in \mathsf{Sol}(\mathscr{M}_{\mathscr{F}})$ be a solution of the acceptability model $\mathscr{M}_{\mathscr{F}}$. Take an arbitrary argument $\alpha\in \mathscr{A}$. Since $\mathscr{L}$ satisfies every equation in $\mathscr{M}_{\mathscr{F}}$, we have $\mathscr{L}(\alpha)=f_{\alpha}\big((\mathscr{L}(\beta)_{\beta\in \alpha^{-}})\big)$, where $f_{\alpha}$ is the acceptability equation of $\alpha$.

According to Def.~\ref{def:acceptabilityModel}, the output of the function $f_{\alpha}$ is completely determined by the labels of the attackers $\beta\in \alpha^{-}$. Now, examine the three possible cases for $\mathscr{L}(\alpha)$:
\begin{description}
    \item[Case 1] If $\mathscr{L}(\alpha)=\mathbf{in}$, then by the equation, every attacker $\beta\in \alpha^{-}$ must satisfy $\mathscr{L}(\beta)=\mathbf{out}$. Hence, condition (ii) of a complete labelling holds for $\alpha$.
    \item[Case 2] If $\mathscr{L}(\alpha)=\mathbf{out}$, then there exists $\beta\in \alpha^{-}$ with $\mathscr{L}(\beta)=\mathbf{in}$. Thus, condition (i) of a complete labelling holds for $\alpha$.
    \item[Case 3] If $\mathscr{L}(\alpha)=\mathbf{und}$, then it is neither the case that all are $\mathbf{out}$ nor that some attacker is $\mathbf{in}$, which indicates that $\alpha$ must satisfy the condition (i) and (ii) of a complete labelling.
\end{description}

Since $\alpha$ was arbitrary, $\mathscr{L}$ satisfies the complete labelling conditions for every argument, i.e., $\mathscr{L}\in \mathscr{L}_{co}(\mathscr{F})$. Given that $\mathscr{L}$ is an arbitrary element of $\mathsf{Sol}(\mathscr{M}_\mathscr{F})$, we conclude that $\mathsf{Sol}(\mathscr{M}_\mathscr{F}) \subseteq \mathscr{L}_{co}(\mathscr{F})$.
\vspace{1.5em}

\textbf{($\Leftarrow$) $\mathscr{L}_{co}(\mathscr{F}) \subseteq \mathsf{Sol}(\mathscr{M}_{\mathscr{F}})$:}
Let $\mathscr{L}\in \mathscr{L}_{co}(\mathscr{F})$ be a complete labelling of $\mathscr{F}$. For any $\alpha\in \mathscr{A}$, we show that $\mathscr{L}(\alpha)=f_{\alpha}\big((\mathscr{L}(\beta)_{\beta\in \alpha^{-}})\big)$:
\begin{description}
    \item[Case 1] If $\mathscr{L}(\alpha)=\mathbf{in}$, then every attacker $\beta\in \alpha^{-}$ is labelled $\mathbf{out}$. According to Def.~\ref{def:acceptabilityModel}, $f_{\alpha}$ returns $\mathbf{in}$ in this situation.
    \item[Case 2] If $\mathscr{L}(\alpha)=\mathbf{out}$, then there exists an attacker $\beta\in \alpha^{-}$ with $\mathscr{L}(\beta)=\mathbf{in}$. By Def.~\ref{def:acceptabilityModel}, $f_{\alpha}$ returns $\mathbf{out}$.
    \item[Case 3] If $\mathscr{L}(\alpha)=\mathbf{und}$, then it is not true that all attackers are $\mathbf{out}$ and it is also not true that some attackers is $\mathrm{in}$. Hence, by Def.~\ref{def:acceptabilityModel}, $f_{\alpha}$ returns $\mathbf{und}$.
\end{description}

Thus, $\mathscr{L}$ satisfies the acceptability equation of $\alpha$ for all $\alpha\in \mathscr{A}$. Consequently, $\mathscr{L}$ is a solution of $\mathscr{M}_{\mathscr{F}}$, i.e., $\mathscr{L}\in \mathsf{Sol}(\mathscr{M}_{\mathscr{F}})$. Given that $\mathscr{L}$ is an arbitrary element of $\mathscr{L}_{co}(\mathscr{F})$, we conclude that $\mathscr{L}_{co}(\mathscr{F}) \subseteq \mathsf{Sol}(\mathscr{M}_{\mathscr{F}})$

\vspace{0.5em}
Therefore, $\mathsf{Sol}(\mathscr{M}_{\mathscr{F}})=\mathscr{L}_{co}(\mathscr{F})$ for every AF $\mathscr{F}$.
    
\end{proof}

\section*{Proof of Theorem~\ref{theorem:StrongtoWeak}}

\begin{proof}
     Suppose that $\psi$ is a but-for cause for $\mathbf{v}(\alpha)$ under $\mathrm{BFT}_{\forall}$.
     It is a fundamental result in abstract argumentation that every argumentation framework admits at least one complete extension (specifically, the grounded extension is always a complete extension). Consequently, the set of complete extensions is strictly non-empty. In any non-empty domain, universal quantification logically entails existential quantification.
     Therefore, by Def.~\ref{def:Model-Satisfaction}, $\mathscr{M}\models_{\forall}\phi$ implies $\mathscr{M}\models_{\exists}\phi$ for any arbitrary formula $\phi$. Hence, the counterfactual condition of $\mathrm{BFT}_{\forall}$ directly entails the corresponding weak version $\mathrm{BFT}_{\exists}$ for the same modified hypothesis $\psi^{\prime}$.
     By an identical line of reasoning, this implication strictly holds from $\mathrm{AC2}^m_{\forall}$ to $\mathrm{AC2}^m_{\exists}$.
\end{proof}

\section*{Proof of Theorem~\ref{theorem:threevariants}}

\begin{proof}
    Assume that $\psi$ is an atomic hypothesis and that $\psi$ is a but-for cause for $\mathbf{v}(\alpha)$ under $\mathrm{BFT}_q$ for some $q\in \{\forall,\exists\}$. By Def.~\ref{def:butforcause}, 
    \[
    \exists \psi^{\prime}\in \mathsf{Mod}_{\mathbf{in/out}}(\psi): \mathscr{M}_{\mathscr{F}}\models_{q}[\psi^{\prime}]\neg \mathbf{v}(\alpha) \text{ for } q\in \{\forall,\exists\}.
    \]

    We now verify that $\psi$ satisfies $\mathrm{AC2}^m_{q}$.

    Choose $\chi$ to be the empty conjunction, i.e., a tautology $\top$. Trivially, $(\mathscr{M}_{\mathscr{F}},\mathscr{L})\Vdash\chi$ holds for any solution $\mathscr{L}$. Since $\psi^{\prime}\wedge \chi$ is logically equivalent to $\psi^{\prime}$, the intervened model $\mathscr{M}_{\mathscr{F}}^{\psi^{\prime}\wedge \chi}$ is identical to $\mathscr{M}_{\mathscr{F}}^{\psi^{\prime}}$. Therefore,
    \[
     \exists \psi^{\prime}\in \mathsf{Mod}_{\mathbf{in/out}}(\psi):\mathscr{M}_{\mathscr{F}}\models_{q}[\psi^{\prime}\wedge \chi]\neg \mathbf{v}(\alpha)
    \]
    holds as well. Note that $\mathsf{Mod}_{\mathbf{in/out}}(\psi)$ is a subset of $\mathsf{Mod}(\psi)$, it holds that
  \[
     \exists \psi^{\prime}\in \mathsf{Mod}(\psi):\mathscr{M}_{\mathscr{F}}\models_{q}[\psi^{\prime}\wedge \chi]\neg \mathbf{v}(\alpha).
    \]

    Hence, $\psi$ meets condition $\mathrm{AC2}^m_{q}$ with the same $\psi^{\prime}$ and the chosen $\chi=\top$. Conditions $\mathrm{AC1}$ and $\mathrm{AC3}$ are already satisfied because (1) $\mathrm{AC1}$ is given in the condition of $\mathrm{BFT}_q$ where $(\mathscr{M},\mathscr{L})\Vdash \psi\wedge \mathbf{v}(\alpha)$, and (2) $\psi$ is an atomic hypothesis, which suggests that $\psi$ is minimal w.r.t set inclusion. Consequently, $\psi$ is an actual cause for $\mathbf{v}(\alpha)$ under $\mathrm{AC2}^m_{q}$.

\end{proof}

\section*{Proof of Theorem~\ref{the:intervention-modification}}

\begin{proof}

Let $\mathscr{L}\in \mathsf{Sol}(\mathscr{M}^\psi_{\mathscr{F}})$ be a solution of the intervened acceptability model $\mathscr{M}^{\psi}_{\mathscr{F}}$. Take an arbitrary argument $\alpha\in \mathscr{A}$. Since $\mathscr{L}$ satisfies every equation in $\mathscr{M}^{\psi}_{\mathscr{F}}$, we have 
\[
\mathscr{L}(\alpha)=f^{\psi}_{\alpha}\big((\mathscr{L}(\beta)_{\beta\in \alpha^{-}})\big)
\]
where $f^{\psi}_{\alpha}$ is the intervened acceptability equation of $\alpha$.

We prove that $\mathscr{L}$ satisfies the complete labelling conditions for every argument $\alpha \in \mathscr{A}$. With additional arguments such as $\mathscr{K}_\alpha$ always labeled $\mathbf{in}$, a complete labelling $\mathscr{L}'$ of the modified AF $\mathscr{F}_{\psi}$ could be construct from $\mathscr{L}$.

We prove this by cases.

\begin{description}
    \item[Case 1] For intervened arguments, which suggests there exists $\mathbf{v}\in \mathbb{V}$ such that $\mathbf{v}(\alpha) \in \mathsf{Sub}(\psi)$, therefore the equation $f^{\psi}_{\alpha}=\mathbf{v}$. Now, we examine the three possible cases in the modified AF $\mathscr{F}_\psi$ to show that in every complete labelling of $\mathscr{F}_\psi$, $\alpha$ is always labeled $\mathbf{v}$:
    \begin{description}
    \item[1.1] If $\mathbf{v}=\mathbf{out}$, then there is an additional initial argument $\mathscr{K}_\alpha$ attacks $\alpha$. Given that $\mathscr{K}_\alpha \not \in  \mathscr{A}$, $\mathscr{K}_\alpha$ could not be intervened according to Def.~\ref{def:Intervention}. So $\mathscr{K}_\alpha$ is initial in $\mathscr{F}_\psi$, and $\alpha$ is attacked by an initial argument $\mathscr{K}_\alpha$, so $\alpha$ is labeled $\mathbf{out}$ in every complete labelling of $\mathscr{F}_\psi$.
    \item[1.2] If $\mathbf{v}=\mathbf{in}$, then all attacks from its parent nodes are removed from $\mathscr{F}_\psi$, which suggests $\alpha$ is now an initial argument in $\mathscr{F}_\psi$, this leads to the consequence that $\alpha$ is labeled $\mathbf{in}$ in every complete labelling of $\mathscr{F}_\psi$.
    \item[1.3] If $\mathbf{v}=\mathbf{und}$, then all attacks from its parent nodes are removed from $\mathscr{F}_\psi$, and a new self-attack is introduced. In every complete labelling of $\mathscr{F}_\psi$, $\alpha$ could not be labeled $\mathbf{in}$, since one of its attacker, $\alpha$, is labeled $\mathbf{in}$; $\alpha$ could not be labeled $\mathbf{out}$ either, because in that case all attackers of $\alpha$ are labeled $\mathbf{out}$, which suggests $\alpha$ should be labeled $\mathbf{in}$. So $\alpha$ could only be labeled $\mathbf{und}$ in every complete labelling of $\mathscr{F}_\psi$.
\end{description}
\item[Case 2] For non-intervened arguments, the intervened function $f^{\psi}_{\alpha}=f_{\alpha}$, which suggests that the label of $\alpha$ is completely determined by the labels of the attackers $\beta\in \alpha^{-}$. Now, examine the three possible cases for $\mathscr{L}(\alpha)$:
\begin{description}
    \item[2.1] If $\mathscr{L}(\alpha)=\mathbf{in}$, then every attacker $\beta\in \alpha^{-}$ must satisfy $\mathscr{L}(\beta)=\mathbf{out}$. Hence, condition (ii) of a complete labelling holds for $\alpha$;
    \item[2.2] If $\mathscr{L}(\alpha)=\mathbf{out}$, then there exists $\beta\in \alpha^{-}$ with $\mathscr{L}(\beta)=\mathbf{in}$. Thus, condition (i) of a complete labelling holds for $\alpha$;
    \item[2.3] If $\mathscr{L}(\alpha)=\mathbf{und}$, then it is neither the case that all are $\mathbf{out}$ nor that some attacker is $\mathbf{in}$, which indicates that $\alpha$ must satisfy the condition (i) and (ii) of a complete labelling.
\end{description}
\end{description}

Since $\alpha$ was arbitrary, $\mathscr{L}$ satisfies the complete labelling conditions for every argument $\alpha \in \mathscr{A}$. With the additional arguments such as $\mathscr{K}_\alpha$ labeled $\mathbf{in}$, a complete labelling $\mathscr{L'}$ of modified AF $\mathscr{F}_\psi$ could be constructed from $\mathscr{L}$ with $\mathscr{L}(\alpha)=\mathscr{L}'(\alpha)$ holds for $\alpha \in \mathscr{A}$.

\end{proof}

\section*{Proof of Proposition~\ref{pro:forbut-for}}

\begin{proof}
    Suppose $\psi$ is a but-for cause for $\mathbf{v}(\alpha)$. By Def.~\ref{def:butforcause}, this implies that $(\mathscr{M}_{\mathscr{F}},\mathscr{L})\Vdash \psi\wedge \mathbf{v}(\alpha)$. Consequently, $\psi$ satisfies the Actuality ($\mathrm{AC1}$). Furthermore, the definition of but-for cause requires $\psi$ to be an atomic hypothesis. As an atomic hypothesis, $\psi$ possesses no proper sub-conjunctions. Therefore, $\psi$ satisfies the Minimality ($\mathrm{AC3}$).
\end{proof}

\section*{Proof of Proposition~\ref{pro:relevant}}

\begin{proof}
    We proceed by case analysis on the structure of the hypothesis $\psi$.
\begin{description}
    \item[Case 1] Assume $\psi$ is an atomic hypothesis, let $\psi=\mathbf{v'}(\beta)$.
    Suppose, for the sake of contradiction, that the set of arguments $\mathscr{A}_{\psi}$ contains at least one argument that is not relevant to $\alpha$. 
    Because $\psi$ is atomic, $\beta$ is the only element that is not relevant to $\alpha$. Since the label of argument is only affected by its parents' label, if changing the label of $\beta$ would affect the label of $\alpha$, then $\beta$ should be an ancestor of $\alpha$, which contradicts the fact that $\beta$ is not relevant to $\alpha$.
    
    \item[Case 2] Assume $\psi$ is a compound hypothesis. Suppose, for the sake of contradiction, that there exists $\beta\in \mathscr{A}_\psi$ s.t. $\beta$ is not relevant to $\alpha$. Consider the proper sub-conjunction $\psi'=\bigwedge_{\gamma\in\mathscr{A}_\psi\setminus \{\beta\}}\mathbf{v}(\gamma)$. If $\psi$ is an actual cause of $\mathbf{v}(\alpha)$, then $\psi$ satisfies $\mathrm{AC3}$, which suggests that the proper sub-conjunction $\psi'$ is not an actual cause of $\mathbf{v}(\alpha)$. Given that $\psi'$ is a proper sub-conjunction of $\psi$, if $\psi$ satisfies $\mathrm{AC1}$, then $\psi'$ satisfies $\mathrm{AC1}$.
    Consequently, the reason $\psi^{\prime}$ fails to be an actual cause is that it violates $\mathrm{AC2}$.
    Therefore, there exists no modified hypothesis of $\psi'$ that results in $\neg\mathbf{v}(\alpha)$. 
    However, because $\psi$ is an actual cause of $\mathbf{v}(\alpha)$, there exists a modified hypothesis of $\psi$ that causes $\neg\mathbf{v}(\alpha)$. The only difference between these two hypotheses is $\mathbf{v'}(\beta)$. 
    As established, changing the label of $\beta$ can only affect the label of $\alpha$ if $\beta$ is an ancestor of $\alpha$, which contradicts the assumption that $\beta$ is not relevant to $\alpha$.
\end{description}

\end{proof}

\section*{Proof of Proposition~\ref{pro:notMonotonicity}}

\begin{proof}
    We proceed by contradiction. Suppose $\psi'$ is an actual cause of $\mathbf{v}(\alpha)$, then there exists a proper sub-conjunction $\psi$ of $\psi'$ such that $\psi$ is again an actual cause of $\mathbf{v}(\alpha)$. This contradicts the fact that $\psi'$ satisfies $\mathrm{AC3}$.
\end{proof}

\section*{Proof of Proposition~\ref{pro:non-trivialexpl.}}
\begin{proof}
This proposition consists of two statements, which we prove respectively.
\vspace{0.5em}

\noindent\textbf{Proof of 1.}
This is proved by constructing a hypothesis $\psi$ that satisfies $\mathrm{AC1}$ and $\mathrm{AC2}^m_{\exists}$.

Given that $\alpha^- \neq \emptyset$ and $\mathscr{L}(\alpha)=\mathrm{in}$, for all $\gamma \in \alpha^-$, it holds that $\mathscr{L}(\gamma)=\mathrm{out}$. For each $\gamma \in \alpha^-$, there exists at least one parent $\eta \in \gamma^-$ s.t $\mathscr{L}(\eta)=\mathrm{in}$. Let $\gamma\in\alpha^-$ be a parent of $\alpha$, let $\mathscr{C}=\{ \eta~|~\mathscr{L}(\eta)=\mathrm{in}, \eta \in \gamma^- \}$ be the set and $\psi=\bigwedge_{\eta\in\mathscr{C}}\mathrm{in(\eta)}$ be the conjunction. This construction ensures that $\psi$ satisfies $\mathrm{AC1}$.

Then we prove that $\psi$ satisfies $\mathrm{AC2}_{\exists}^m$. Let $\psi'=\bigwedge_{\eta\in\mathscr{C}}\mathrm{out(\eta)}$. By Def.~\ref{def:modified-hyp}, $\psi'$ is a modified hypothesis of $\psi$. If the model is intervened by $\psi'$, then $\mathscr{L}(\gamma)\neq \mathrm{out}$ since no parent of $\gamma$ is labeled $\mathrm{in}$. This leads to the fact that $\mathscr{L}(\alpha)\neq \mathrm{in}$. So $\psi$ satisfies $\mathrm{AC2_{\exists}^m}$.

The constructed $\psi$ is a finite set since the model $\mathscr{M}$ is a finite model. By checking whether there exists a proper sub-conjunction of $\psi$ that satisfies $\mathrm{AC1}$ and $\mathrm{AC2}_{\exists}^m$. If not, then $\psi$ is the actual cause; if so, then again checking whether there exists a proper sub-conjunction. Within finite steps we could end with a $\psi'$ that is a proper sub-conjunction that satisfies $\mathrm{AC3}$, and the $\psi'$ is the actual cause.

\vspace{1.5em}
\noindent\textbf{Proof of 2.} The Overdetermination presented in Ex.~\ref{ex:overdetermination} serves as a counterexample.

\end{proof}

\section*{Proof of Proposition~\ref{pro:order0invariant}}
\begin{proof}
According to the restriction on hypotheses in Def.~\ref{def:syntax}, for any atomic formulas \(\mathbf{v_1}(\alpha_1), \mathbf{v_2}(\alpha_2) \in \mathsf{Sub}(\psi)\), we have \(\alpha_1 \neq \alpha_2\). Therefore, for any atomic formula \(\mathbf{v}(\alpha) \in \mathsf{Sub}(\psi)\), if \(\mathbf{v}(\alpha) \in \mathsf{Sub}(\psi_1)\), then there does not exist any \(\mathbf{v}' \in \mathbb{V}\) such that \(\mathbf{v}'(\alpha) \in \mathsf{Sub}(\psi_2)\). This shows that for any argument \(\alpha\), if a conjunct \(\psi_1\) of \(\psi\) intervenes on \(\alpha_i\), then there is no atomic formula related to \(\alpha_i\) in the other conjunct \(\psi_2\). (Denoted as Sub-conclusion 1.)

Based on this, we may assume that 
\(\psi = \mathbf{v_1}(\alpha_1) \land \mathbf{v_2}(\alpha_2) \land \cdots \land \mathbf{v_n}(\alpha_n)\) 
is an intervention consisting of \(n\) (\(n \ge 2\)) atomic formulas. After an arbitrary partition, the two conjuncts are denoted as 
\begin{align*}
    \psi_1 &= \mathbf{v_1}(\alpha_1) \land \mathbf{v_2}(\alpha_2) \land \cdots \land \mathbf{v_j}(\alpha_j),\\
\psi_2 &= \mathbf{v_{j+1}}(\alpha_{j+1}) \land \cdots \land \mathbf{v_n}(\alpha_n),
\end{align*}
where \(1 \le j < n\). For an acceptability model \(\mathscr{M}_{\mathscr{F}} = \{ f_\alpha \}_{\alpha\in\mathscr{A}}\), by Def.~\ref{def:Intervention}, we have 
\[
\mathscr{M}^{\psi}_{\mathscr{F}} = \{ f^{\psi}_\alpha \}_{\alpha\in \mathscr{A}},\quad 
f_{\alpha}^{\psi} = 
\begin{cases}
\mathbf{\mathbf{v_i}}, & \exists i\in \{1,\cdots,k\}: \alpha=\alpha_i; \\[3pt]
f_{\alpha}, & \text{otherwise}.
\end{cases}
\]

Next, we prove that \(\mathscr{M}^{\psi}_{\mathscr{F}} = (\mathscr{M}^{\psi_1}_{\mathscr{F}})^{\psi_2}=(\mathscr{M}^{\psi_2}_{\mathscr{F}})^{\psi_1}\), i.e., for any \(f_\alpha \in \mathscr{M}_{\mathscr{F}}\),
\[
f^{\psi}_{\alpha} = (f^{\psi_1}_{\alpha})^{\psi_2} = (f^{\psi_2}_{\alpha})^{\psi_1}.
\]

We proceed by case analysis on an arbitrary argument $\alpha\in \mathscr{A}$.

\begin{description}
    \item[Case 1] For an argument \(\alpha\) that does \emph{not} appear in \(\psi\), i.e., there is no \(\mathbf{v}\) such that \(\mathbf{v}(\alpha) \in \mathsf{Sub}(\psi)\). By Def.~\ref{def:Intervention}, \(f^{\psi}_{\alpha} = f_{\alpha}\). Similarly, since \(\psi_1\) and \(\psi_2\) are conjuncts of \(\psi\), from \(\mathbf{v}(\alpha) \notin \mathsf{Sub}(\psi)\) we obtain \(\mathbf{v}(\alpha) \notin \mathsf{Sub}(\psi_1)\) and \(\mathbf{v}(\alpha) \notin \mathsf{Sub}(\psi_2)\). By Def.~\ref{def:Intervention}, \((f^{\psi_1}_{\alpha})^{\psi_2} = f^{\psi_2}_{\alpha} = f_{\alpha}\). Likewise, \((f^{\psi_2}_{\alpha})^{\psi_1} = f_{\alpha}\).
    \item[Case 2] For an argument \(\alpha\) that appears in \(\psi_1\), i.e., there exists \(\mathbf{v}\) such that \(\mathbf{v}(\alpha) \in \mathsf{Sub}(\psi_1)\). By Def.~\ref{def:Intervention}, \(f^{\psi_1}_{\alpha} = \mathbf{v}\). By Sub-conclusion 1, there is no \(\mathbf{v}'\) such that \(\mathbf{v}'(\alpha) \in \mathsf{Sub}(\psi_2)\). Hence, \((f^{\psi_1}_{\alpha})^{\psi_2} = f^{\psi_1}_{\alpha} = \mathbf{v}\). Since \(\psi_1\) is a conjunct of \(\psi\) and \(\mathbf{v}(\alpha) \in \mathsf{Sub}(\psi_1)\), we have \(\mathbf{v}(\alpha) \in \mathsf{Sub}(\psi)\). Then, \(f^{\psi}_{\alpha} = \mathbf{v}\). Thus, \(f^{\psi}_{\alpha} = (f^{\psi_1}_{\alpha})^{\psi_2} = \mathbf{v}\). Similarly, \((f^{\psi_2}_{\alpha})^{\psi_1} = \mathbf{v}\).
    \item[Case 3] For an argument \(\alpha\) that appears in \(\psi_2\), i.e., there exists \(\mathbf{v}\) such that \(\mathbf{v}(\alpha) \in \mathsf{Sub}(\psi_2)\). By Sub-conclusion 1, there is no \(\mathbf{v}'\) such that \(\mathbf{v}'(\alpha) \in \mathsf{Sub}(\psi_1)\). By Def.~\ref{def:Intervention}, \(f^{\psi_1}_{\alpha} = f_{\alpha}\). Moreover, since \(\mathbf{v}(\alpha) \in \mathsf{Sub}(\psi_2)\), \((f^{\psi_1}_{\alpha})^{\psi_2} = f^{\psi_2}_{\alpha} = \mathbf{v}\). Because \(\psi_2\) is a conjunct of \(\psi\) and \(\mathbf{v}(\alpha) \in \mathsf{Sub}(\psi_2)\), we have \(\mathbf{v}(\alpha) \in \mathsf{Sub}(\psi)\). Then, \(f^{\psi}_{\alpha} = \mathbf{v}\). Thus, \(f^{\psi}_{\alpha} = (f^{\psi_1}_{\alpha})^{\psi_2} = \mathbf{v}\). Similarly, \((f^{\psi_2}_{\alpha})^{\psi_1} = \mathbf{v}\).
\end{description}

In summary, for any argument \(\alpha\), we have 
\[
f^{\psi}_{\alpha} = (f^{\psi_1}_{\alpha})^{\psi_2} = (f^{\psi_2}_{\alpha})^{\psi_1}.
\]

Given that \( \mathscr{M}^{\psi}_{\mathscr{F}}=\{ f^{\psi}_{\alpha} \}\), this results in 
\[
\mathscr{M}^{\psi}_{\mathscr{F}} = (\mathscr{M}^{\psi_1}_{\mathscr{F}})^{\psi_2}=(\mathscr{M}^{\psi_2}_{\mathscr{F}})^{\psi_1}.
\]
\end{proof}

\section*{Proof of Proposition~\ref{pro:counter-root}}

\begin{proof}
This proposition consists of two statements, which we prove respectively.
\vspace{0.5em}

\noindent\textbf{Proof of 1.}
Assume that $\psi\in \operatorname{Counter}(\mathscr{F},\alpha)$ with $\mathscr{A}_{\psi}\subseteq \mathbf{in}(\mathscr{L})_{I}\cup \mathbf{in}(\mathscr{L})_{S}$ and $(\alpha_i,\alpha)\in \overline{\mathbf{in}(\mathscr{L})}^{+}$ for all $\alpha_i\in \mathscr{A}_{\psi}$. There exists a complete labelling $\mathscr{L}\in \mathscr{L}_{co}(\mathscr{F})$ s.t.:
\begin{itemize}
    \item $\mathscr{L}(\alpha)=\mathbf{in}$;
    \item $\mathscr{A}_{\psi}\subseteq E_0$ with $E_0=\mathbf{in}(\mathscr{L})$;
    \item for every $\beta\in \mathscr{A}_{\psi}$, $\beta\in E_{0_I}\cup E_{0_S}$ and $(\beta,\alpha)\in \overline{E}^{+}$.
\end{itemize}

Now define:
\[
R_{E_0}:=\{\beta\in E_0\mid (\beta,\alpha)\in \overline{E}_{0}^{+}~\text{and}~\beta\in E_{0_I}\cup E_{0_S}\}, 
\]
which is exactly the root reason obtained from the complete extension $E_0$ according to the definition of Root Reasons. The condition above directly yields $\mathscr{A}_{\psi}\subseteq R_{E_0}$. Hence, the required root reason exists.

\vspace{0.5em}
\textbf{The converse fails (counterexample).}

 \begin{center}
        \begin{minipage}[c]{0.4\textwidth}
            \centering
        \begin{tikzpicture}[scale=0.88,->,>=latex,
        auto,
        every node/.style={draw=black,circle,minimum size=0.7cm,inner sep=0pt},
    outnode/.style={text=black, pattern=north east lines, pattern color=gray!50},
    innode/.style={fill=gray!130, text=white},
    undnode/.style={fill=gray!50, text=white}
        ]
            \node[innode] (y1) at (-1.8,0.7) {$\gamma_{1}$};
            \node[innode] (y2) at (-1.8,-0.7) {$\gamma_{2}$};
            \node[outnode] (b1) at (0,0.7) {$\beta_{1}$};
            \node[outnode] (b2) at (0,-0.7) {$\beta_{2}$};
            \node[innode] (a) at (1.8,0) {$\alpha$};
        
            \draw (y1) -- (b1);
            \draw (b1) -- (a);
            \draw (y2) -- (b2);
            \draw (b2) -- (a);

        \end{tikzpicture}
            \captionof{figure}{$\mathscr{F}_6$: V-Shape}
            \label{fig:v-shape}
        \end{minipage}
        \hfill
        \begin{minipage}[c]{0.55\textwidth}
        \vspace{-10pt}
            Consider the AF $\mathscr{F}_6$ in Fig.~\ref{fig:v-shape} with the topic $\alpha$. The set $E=\{\gamma_1,\gamma_2\}$ is the only root reason for $\alpha$, while $\psi_1=\mathbf{in}(\gamma_1)$, $\psi_2=\mathbf{in}(\gamma_2)$ and $\psi_3=\mathbf{in}(\alpha)$ are the actual cause for $\mathbf{in}(\alpha)$. It is not the case that $E\subseteq \mathscr{A}_{\psi}$ for $i\in \{1,2,3\}$.
        \end{minipage}
        \vspace{-3pt}
    \end{center}

\vspace{1.5em}

\noindent\textbf{Proof of 2.}
Assume that $\psi\in\operatorname{Counter}(\mathscr{F},\alpha)$ with $\mathscr{A}_{\psi}\subseteq\mathbf{in}(\mathscr{L})$. There exists a complete labelling $\mathscr{L}\in \mathscr{L}_{co}(\mathscr{F})$ s.t., $\mathscr{L}(\alpha)=\mathbf{in}$ and for every $\alpha_i\in \mathscr{A}_{\psi}$, we have $\mathscr{L}(\alpha_i)=\mathbf{in}$. Let $E_0\in \mathscr{E}_{co}(\mathscr{F})$ with $\mathscr{A}_{\psi}\subseteq E_0$ and $\alpha\in E_0$, hence $E_0$ is admissible and conflict-free.

Define:
\[
S=\{\beta\in E_0\mid \beta~\text{is relevant to}~\alpha~\text{in}~\mathscr{F}\}.
\]
By Proposition~\ref{pro:relevant}, each argument in $\mathscr{A}_{\psi}$ is relevant to $\alpha$; therefore, $\mathscr{A}_{\psi}\subseteq S$.

We now show that $S$ is sufficient for $\alpha$.
\begin{itemize}
    \item \textbf{Relevance:} By construction, we know that every $\beta\in S$ is relevant for $\alpha$.
    \item \textbf{Conflict-freeness:} $S\subseteq E_0$ and $E_0$ is conflict-free, so $S$ is conflict-free.
    \item \textbf{Defence:} Take any $\beta\in S\cup \{\alpha\}$.
    \begin{itemize}
        \item If $\beta\in S$, then $\beta\in E_0$ and $E_0$ is admissible. For every attacker $\delta\in \beta^{-}$, there exists $\gamma\in E_0$ with $(\gamma,\delta)\in \mathscr{R}$. Because $\beta$ is relevant to $\alpha$, there exists a directed path from $\beta$ to $\alpha$. Extending this path via $(\delta,\beta)\in \mathscr{R}$ and $(\gamma,\delta)\in \mathscr{R}$ yields a directed path from $\gamma$ to $\alpha$. Thus $\gamma$ is relevant to $\alpha$ and consequently $\gamma\in S$.
        \item If $\beta=\alpha$, the same argument applies.
    \end{itemize}

\end{itemize}

Therefore, every attacker of $\beta$ is counter-attacked by an argument in $S$, i.e., $S$ defends $S\cup \{\alpha\}$.

Thus, $S$ satisfied the three conditions, the set $E=S\cup \{\alpha\}$ belongs to $\operatorname{Suf}(\mathscr{F},\alpha)$. Since $\mathscr{A}_{\psi}\subseteq S\subseteq E_{0}$, the required sufficient explanation exists.

\vspace{0.5em}
\textbf{Failure of the converse.}
Consider the AF in Fig.~\ref{fig:overdetermination} with the topic $\eta$. The set $\{\alpha,\eta\}\in \operatorname{Suf}(\mathscr{F},\eta)$; similarly, $\{\beta,\eta\}\in \operatorname{Suf}(\mathscr{F},\eta)$. Take the sufficient explanation $E=\{\alpha,\eta\}$. There exists no  $\psi\in\operatorname{Counter}(\mathscr{F},\alpha)$ with $\mathscr{A}_{\psi}\subseteq\mathbf{in}(\mathscr{L})$ and $\mathscr{A}_{\psi}\subseteq E\setminus \{\eta\}$, because for the candidates $\psi_1=\mathbf{in}(\alpha)\wedge\mathbf{in}(\beta)$ and $\psi_2=\mathbf{in}(\eta)$, it holds that $\mathscr{A}_{\psi_{i}}\not\subseteq \{\alpha\}$ for $i\in \{1,2\}$. Hence the inclusion $\mathscr{A}_{\psi}\subseteq E\setminus \{\eta\}$ is impossible for this $E$. This demonstrates that the converse of the statement does not hold in general.

\end{proof}

\end{document}